\documentclass[12pt,draftcls,onecolumn]{IEEEtran}
\usepackage{amsmath}
\usepackage{amsfonts}
\usepackage{amssymb}
\interdisplaylinepenalty=2500
\usepackage[latin1]{inputenc}
\usepackage[T1]{fontenc}

\setlength{\textwidth}{6.75in}

\setlength{\textheight}{9.0in}

\linespread{1.38}


%
\ifCLASSINFOpdf
   \usepackage[pdftex]{graphicx}
\else
   \usepackage[dvips]{graphicx}
\fi

\usepackage{array}
\usepackage{fancyhdr}
%


\newtheorem{remark}{Remark}
\newtheorem{corollary}{{\textit{Corollary}}}
\begin{document}

\pagestyle{plain}

\thispagestyle{fancy}
\newtheorem{proposition}{\textit{Proposition}}
\newtheorem{definition}{\textit{Definition}}
\newtheorem{problem}{\textit{Problem}}

%
\renewcommand{\headrulewidth}{0pt} \fancyhf{} This work has been
submitted to the IEEE for possible publication. Copyright may be
transferred without notice, after which this version may no longer
be accessible.

\title{Delay-limited Source and Channel Coding of Quasi-Stationary Sources over Block Fading Channels: Design and Scaling Laws}

\author{\IEEEauthorblockN{Roghayeh Joda$^{\dag*}$ and Farshad~Lahouti$^{\dag}$}\\
\IEEEauthorblockA{$\dag$School of Electrical and Computer Engineering, University of Tehran, Iran\\
$^*$Department of Electrical and Computer Engineering, Polytechnic Institute of New York University, USA\\
Email: [rjoda, lahouti]@ut.ac.ir, http://wmc.ut.ac.ir}
 }
\maketitle \thispagestyle{empty}
 \pagestyle{empty}
\begin{abstract}
In this paper, delay-limited transmission of quasi-stationary
sources over block fading channels are considered. Considering
distortion outage probability as the performance measure, two source
and channel coding schemes with power adaptive transmission are
presented. The first one is optimized for fixed rate transmission,
and hence enjoys simplicity of implementation. The second one is a
high performance scheme, which also benefits from optimized rate
adaptation with respect to source and channel states. In high SNR
regime, the performance scaling laws in terms of outage distortion
exponent and asymptotic outage distortion gain are derived, where
two schemes with fixed transmission power and adaptive or optimized
fixed rates are considered as benchmarks for comparisons. Various
analytical and numerical results are provided which demonstrate a
superior performance for source and channel optimized rate and power
adaptive scheme. It is also observed that from a distortion outage
perspective, the fixed rate adaptive power scheme substantially
outperforms an adaptive rate fixed power scheme for delay-limited
transmission of quasi-stationary sources over wireless block fading
channels. The effect of the characteristics of the quasi-stationary
source on performance, and the implication of the results for
transmission of stationary sources are also investigated.
\end{abstract}
\IEEEpeerreviewmaketitle
\begin{IEEEkeywords}
Outage capacity, quasi-stationary source, outage distortion, source
and channel coding, rate and power adaptation.
\end{IEEEkeywords}
\pagestyle{plain} \setcounter{page}{0}
\section{Introduction}\label{SI}
Multimedia signals such as speech and video are usually
quasi-stationary and their transmission in real-time or streaming
applications is subject to certain delay constraints. The delay
limited communications over a wireless block fading channel is
studied from a channel coding perspective in, e.g.,
\cite{{R1},{R5}}, where the performance is quantified in terms of
channel outage probability, outage capacity and delay-limited
capacity. In this paper, we study the delay-limited transmission of a
quasi-stationary source over a block fading channel from the
perspective of source and channel coding designs and performance
scaling laws.

The zero outage capacity region of the multiple access and the
broadcast block fading channels, respectively are studied in
\cite{R5} and \cite{R4}. In \cite{R5}, it is shown that the delay
limited capacity of a single user Rayleigh block fading channel is
zero. In \cite{R1}, \cite{R4} and \cite{R6}, power adaptation for
constant rate transmission over point-to-point, broadcast and
multiple access block fading channels is designed for minimum outage
probability. The outage performance of the relay block fading
channel is investigated in, e.g., \cite{R7}\nocite{R8}-\cite{R9}.

The end to end mean distortion for transmission of a stationary
source over a block fading channel is considered in
\cite{R11}\nocite{R14,R15,R16,R17}-\cite{R18}. The performance is
studied in terms of the (mean) distortion exponent or the decay rate
of the end to end mean distortion with (channel) signal to noise
ratio (SNR) in high SNR regime.

The transmission of a stationary source over a MIMO block fading
channel is considered in \cite{R21}, where the distortion outage
probability and the outage distortion exponent are considered as
performance measures. For constant power transmission, it is shown
in \cite{R21} that separate source and channel coding schemes with
constant (optimized) or adaptive transmission rate essentially
provide the same distortion outage probability.

We consider the delay-limited transmission of a quasi-stationary
source over a wireless block fading channel. The assumption is that
the channel state information is known at the transmitter. The
source and channel separation does not hold in this setting
\cite{R19}\cite{R20}, however, for practical reasons we are
interested in exploring the designs that combine conventional high
performance source codes and channel codes in an optimized manner.
Specifically, a framework for rate and/or power adaptation using
source and channel codes, that achieve the rate-distortion and the
capacity in a given state of source and channel, is presented. The
applicable performance measures of interest, as described in Section
II, are the probability of distortion outage and the outage
distortion exponent. Under an average transmission power constraint,
two designs are presented. The first scheme devises a channel
optimized power adaptation to minimize the distortion outage
probability for a given optimized fixed rate, and hence enjoys the
simplicity of single rate transmission. The second scheme formulates
adaptation solutions for transmission power and source and channel
coding rate such that the distortion outage probability is
minimized.  As benchmarks, we consider two constant power
delay-limited communication schemes with channel optimized adaptive
or fixed rates.

The performance of the presented schemes are assessed and compared
both analytically and numerically. Specifically for large enough
SNR, different scaling laws involving outage distortion exponent and
asymptotic outage distortion gain are derived. The analyses are
mainly derived for wireless block fading channels and are
specialized to Rayleigh block fading channels in certain cases. The
results demonstrate the superior performance of the source and
channel optimized rate and power adaptive scheme. An interesting
observation is that from a distortion outage perspective, an
adaptive power single rate scheme noticeably outperforms a rate
adaptive scheme with constant transmission power. This is the
opposite of the observation made in \cite{R22} from the Shannon
capacity perspective. The effect of the statistics of
quasi-stationary source on the performance of the presented schemes
is also investigated. In the marginal case of a stationary source,
our studies reveal that a fixed optimized rate provides the same
outage distortion as the optimized rate adaptation scheme, either
with adaptive or constant transmission power. The results shed light
on proper cross-layer design strategies for efficient and reliable
transmission of quasi-stationary sources over block fading channels.

The paper is organized as follows. Following the preliminaries and
the description of system model in Section \ref{SII}, Section
\ref{SIII} presents the design based on fixed source coding rate for
minimized distortion outage probability. Next, in Section \ref{SIV},
we present the adaptive rate and power source and channel coding
design. Finally performance evaluations and comparisons are
presented in Section \ref{SV}.

\section{System Model}\label{SII}
We consider the transmission of a quasi-stationary source over a
block fading channel. Specifically, the source is finite state
quasi-stationary Gaussian with zero mean and variance $\sigma^2_s$
in a given block, where
$s\in\mathcal{S}:\;\mathcal{S}=\{1,2,...,N_s\}$\cite{R2}. The source
state $s$ from the set $\mathcal{S}$ is a discrete random variable
with the probability density function (pdf) $P(s)$. The source
coding rate in a block in state $s$, is denoted by $R_s$ bits per
source sample. Hence, according to the distortion-rate function of a
Gaussian source \cite{R2}\cite{R3}, the instantaneous distortion in
a block in state $s$ is given by $D=\sigma^2_s\,2^{-2R_s}$.

We consider a point to point wireless block fading channel for
transmitting the source information to the destination. Let $X$, $Y$
and $Z$, respectively indicate channel input, output and additive
noise, where $Z$ is an i.i.d Gaussian noise $\sim\mathcal{N}(0,1)$.
Therefore, we have $Y=\alpha X+Z$, where $\alpha$ is the
multiplicative fading which is constant across one block and
independently varies from one block to another according to the
continuous probability density function $f(\alpha)$.
For a Rayleigh fading channel, the channel gain $|\alpha|^2$ is an
exponentially distributed random variable, where we here consider
$E[\alpha]=1$.

The block diagram of the system is depicted in Fig. \ref{F20}. We
consider $K$ source samples spanning one source block coded into a
finite index by the source encoder. This index is transmitted in $N$
channel uses spanning one fading block (bandwidth expansion ratio
$b=\frac{N}{K}$, here $b\in\mathbb{N}$). We assume that $K$ and $N$
are large enough such that, over a given state of source and
channel, the rate distortion function of the quasi-stationary source
and the instantaneous capacity of the block fading channel may be
achieved.
The source coding rate $R_s$ in bits per source sample and channel
coding rate $R$ in bits per channel use are related by $R_s=bR$. The
instantaneous capacity of the fading Gaussian channel \cite{R1} over
one block (in bits per channel use) is defined as \vspace{-4pt}
\begin{equation}\label{E40}
C(\alpha,\gamma)=\frac{1}{2}\log_2(1+\alpha\gamma){.}
\end{equation}\vspace{-2pt}
In case of a channel outage, in each state of the source and the
channel $(s,\alpha)$, the instantaneous distortion is equal to the
variance of the source and the decoder reconstructs the mean of the
source. Whereas without channel outage, distortion is given by
$\sigma^2_s2^{-2bR}$. Thus, the distortion at a given state
$(s,\alpha)$ is equal to\vspace{-8pt}
\begin{equation}\label{E10}
D{(\sigma_s,\alpha,\gamma)}=\begin{cases}
\sigma^2_s 2^{-2bR} &\text{if}\; R\leq C(\alpha,\gamma) \\
\sigma^2_s   &\text{if}\; R>C(\alpha,\gamma){,}
\end{cases}
\end{equation}
where $C(\alpha,\gamma)$ is given in \eqref{E40}. Let $D_m$ be a
nonnegative constant and represent the maximum allowable distortion.
The distortion outage probability evaluated at $D_m$ is defined as
\begin{equation}\label{E43}
\text{P}_\text{Dout}=\text{Pr}(D(\sigma_s,\alpha,\gamma)> D_m){,}
\end{equation}
where $\gamma=\gamma(\sigma_s,\alpha)$ is the transmission power and
we have the power constraint $\text{E}[\gamma]\leq \bar{P}$.

The outage distortion exponent is defined as \cite{R21}\vspace{-2pt}
\begin{equation}\label{E83}
\Delta_{\textit{OD}}=\underset{\bar{P}\rightarrow\infty}{\lim}{-\frac{\ln{\text{P}_\text{Dout}}}{\ln{\bar{P}}}}.
\end{equation}\vspace{-2pt}
Let $\bar{P}_1$ and $\bar{P}_2$ be the average powers transmitted to
asymptotically achieve a specific distortion outage probability by
two different schemes. We define the asymptotic outage distortion
gain as follows\vspace{-10pt}
\begin{equation}\label{E68}
G\!_{O\!D}=10\log_{10}{\bar{P}_2}-10\log_{10}{\bar{P}_1}.
\end{equation}\vspace{-2pt}
In the sequel, we also use the following mathematical
definitions\vspace{-4pt}
\begin{equation}\label{E94}
  [x]^+:=\operatorname{max}\{x,0\}
\end{equation}
\vspace{-4pt}and\vspace{-4pt}
\begin{equation}\label{E95}
E_1(x):={\int^\infty_x}{\frac{e^{-\alpha}}{\alpha}\mathrm{d}\alpha},
\end{equation}
as well as the following two approximations (refer to (5.1.53) in
\cite{R13})\vspace{-2pt}
\begin{equation}\label{E98}
E_1(x)\cong-E_c-\ln(x) \quad \text{if} \;x\rightarrow 0\;,
\end{equation}\vspace{-2pt}
where $E_c=0.5772156649$ is the Euler constant and\vspace{-2pt}
\begin{equation}\label{E99}
e^x\cong1+x \quad \text{if} \;x\rightarrow 0.
\end{equation}

\section{Channel Optimized Power Adaptation with Fixed Rate Source and Channel Coding}\label{SIII}
In this section, the aim is to find the optimized power allocation
strategy and fixed rate such that the distortion outage probability
for communication of a quasi-stationary source over a wireless
fading channel is minimized. With a fixed rate, the encoders do not
need to be rate adaptive which simplifies the design and
implementation of transceivers. Noting \eqref{E43} the distortion
outage probability is computed as follows\vspace{-4pt}
\begin{align}\label{E75}
\text{P}_\text{Dout}=\text{Pr}\bigl(R>
C\left(\alpha,\gamma\right)\bigr)\text{Pr}\bigl(\sigma^2_s>D_m\bigr)+\bigl(1-\text{Pr}\bigl(R>
C\left(\alpha,\gamma\right)\bigr)\bigr)\text{Pr}\bigl(\sigma^2_s2^{-2bR}>D_m\bigr).
\end{align}
We have the following design problem.
\begin{problem}\label{PR2}
The problem of delay-limited channel optimized power adaptation for
communication of a quasi-stationary source with minimum distortion
outage probability (COPA-MDO) is formulated as follows\vspace{-8pt}
\begin{align}\label{E76}
&\underset{\gamma,R}{\operatorname{min}}\;\text{P}_\text{Dout} \nonumber\\
&\text{subject to}\; \text{E}[\gamma]\leq\bar{P}{.}
\end{align}
\end{problem}
The solution to Problem \ref{PR2} is obtained in two steps. For a
given fixed rate and noting
$\text{Pr}\bigl(\sigma^2_s2^{-2bR}>D_m\bigr)\leq\text{Pr}\bigl(\sigma^2_s>D_m\bigr):\forall
R>0$, minimizing \eqref{E75} is equivalent to minimizing
$\text{Pr}\bigl(R>C\left(\alpha,\gamma\right)\bigr)$. Problem
\ref{PR3} below formulates this (sub)problem and provides the
optimum power adaptation strategy as a function of $R$. Next,
solving Problem \ref{PR6}, as described below, provides the optimum
solution for $R$, and hence problem \ref{PR2} is solved.

\begin{problem}\label{PR3}
With COPA-MDO scheme and with a given fixed channel coding rate $R$,
the power adaptation problem is formulated as follows\vspace{-4pt}
\begin{align}\label{E21}
\begin{split}
\underset{\gamma}{\operatorname{min}}\;\text{Pr}(R>C(\alpha,\gamma)) \\
\text{subject to}\; \text{E}[\gamma]\leq\bar{P}{.}
\end{split}
\end{align}
\end{problem}

\begin{proposition}\label{P3}
The solution to Problem \ref{PR3} for optimized power adaptation
over a block fading channel is given by
\begin{equation}\label{E28}
\gamma^*{(\alpha,R)}=\begin{cases}
\frac{2^{2R}-1}{\alpha} &\text{if}\;\;\alpha\geq\frac{2^{2R}-1}{q^*_1} \\
0 &\text{if}\;\;\alpha<\frac{2^{2R}-1}{q^*_1}{,}
\end{cases}
\end{equation}
in which $q^*_1$ is selected such that the power constraint is
satisfied with equality, i.e.,
\begin{equation}\label{E129}
\underset{\alpha\geq\frac{2^{2R}-1}{q^*_1}}{\int}{\frac{2^{2R}-1}{\alpha}
f(\alpha)\,\mathrm{d}\alpha}=\bar{P}.
\end{equation}
Specifically for Rayleigh block fading channel, \eqref{E129} is
simplified to
\begin{equation}\label{E54}
\left(2^{2R}-1\right)E_1\left(\frac{2^{2R}-1}{q^*_1}\right)=\bar{P}.
\end{equation}
\end{proposition}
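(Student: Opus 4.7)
The plan is to reduce the outage-minimizing power allocation to a truncated channel inversion policy, and then pin down the inversion threshold via the power constraint. First I would observe that, for fixed $R$, the event $R > C(\alpha,\gamma) = \tfrac12\log_2(1+\alpha\gamma)$ is equivalent to $\gamma < (2^{2R}-1)/\alpha$. Hence at channel state $\alpha$ the \emph{minimum} power that averts outage is $\gamma_{\min}(\alpha):=(2^{2R}-1)/\alpha$; any smaller positive power leaves state $\alpha$ in outage.

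Next I would argue that, without loss of optimality, the minimizing allocation $\gamma^*(\alpha)$ takes only the two values $\gamma_{\min}(\alpha)$ and $0$ at each state. Allocating strictly more than $\gamma_{\min}(\alpha)$ wastes power that could rescue another state from outage, and allocating $0 < \gamma < \gamma_{\min}(\alpha)$ spends power without averting outage at $\alpha$ and is dominated by the silent choice. Thus Problem~\ref{PR3} reduces to selecting a set $\mathcal{A}$ of served states so as to maximize $\Pr(\alpha \in \mathcal{A})$ subject to $\int_{\mathcal{A}}\gamma_{\min}(\alpha)\,f(\alpha)\,\mathrm{d}\alpha \leq \bar P$.

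Since $\gamma_{\min}(\alpha)$ is strictly decreasing in $\alpha$, the cost-per-unit-served-probability is smallest at the largest $\alpha$. A standard exchange (or Lagrangian/KKT) argument then shows that the optimal $\mathcal{A}$ is of threshold form $\{\alpha \geq \alpha^\ast\}$: if the optimal set were not of this form, perturbing it by swapping an infinitesimal mass at some $\alpha_1 \in \mathcal{A}$ with a mass at a larger $\alpha_2 \notin \mathcal{A}$ would either increase $\Pr(\mathcal{A})$ at equal or smaller cost, contradicting optimality. Writing the threshold as $\alpha^\ast = (2^{2R}-1)/q^\ast_1$ recovers \eqref{E28}, and because the served probability is strictly monotone in $\alpha^\ast$, the optimum is attained with the average power constraint tight, giving \eqref{E129}.

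The main obstacle is formalizing the exchange argument for a general continuous $f(\alpha)$ (handling sets of measure zero and verifying that the perturbation is feasible); this is routine once one restricts attention to the canonical two-valued policies identified above. Finally, for the Rayleigh specialization with $\mathrm{E}[\alpha]=1$ and $f(\alpha)=e^{-\alpha}$, I would substitute into \eqref{E129}, pull $2^{2R}-1$ out of the integral, and recognize the remaining integrand as the exponential integral $E_1(\cdot)$ defined in \eqref{E95}, which yields \eqref{E54}.
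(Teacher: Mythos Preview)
Your argument is correct and is essentially the standard truncated-channel-inversion derivation that the paper invokes by reference: the paper's own proof simply cites Proposition~4 of \cite{R1} for the threshold power allocation and then specializes \eqref{E129} to the Rayleigh case to obtain \eqref{E54}. You have unpacked that cited argument (two-valued policy, exchange/KKT reduction to a threshold set, tightness of the power constraint), so there is no substantive difference in approach.
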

\begin{proof}
The proof follows that of Proposition 4 in \cite{R1}. Based on
\eqref{E129}, for Rayleigh block fading channel $q_1^*$ is to
satisfy \eqref{E54}. In fact $q_1^*$ sets the SNR threshold below
which the channel outage occurs.
\end{proof}
Now for obtained $\gamma^*(\alpha,R)$, minimizing \eqref{E75} is
equivalent to maximizing
\begin{equation}\label{E23}
\Lambda(R)=\text{Pr}\biggl(R\leq
C\left(\alpha,\gamma^*(\alpha,R)\right)\biggr)\biggl(\text{Pr}\left(\sigma^2_s>D_m\right)-\text{Pr}\left(\sigma^2_s2^{-2bR}>D_m\right)\biggr),
\end{equation}
as formulated in Problem \ref{PR6} below.
\begin{problem}\label{PR6}
For COPA-MDO scheme with optimum power adaptation
$\gamma^*(\alpha,R)$, the optimum $R$, denoted by $R^*$, is given by
the solution to the following optimization problem
\begin{equation}
\begin{split}
&\underset{R}{\operatorname{max}}\;\Lambda(R)\\
&\text{subject
to}\;\text{E}_{\alpha}[\gamma^*(\alpha,R)]\leq\bar{P}.
\end{split}
\end{equation}
\end{problem}

\begin{proposition}\label{P4}
Solution to Problem \ref{PR6} for block fading channel is $R^*$,
where $R^*$ is an element of the set
$\{\frac{1}{2b}\log_2\frac{\sigma^2_s}{D_m};s\in\mathcal{S},\;\sigma^2_s\geq
D_m\}$ that maximizes
\begin{equation}
{\text{Pr}\left(\alpha\geq\frac{2^{2R}-1}{q^*_1}\right)\biggl(\text{Pr}\left(\!\sigma^2_s>D_m\right)-\text{Pr}\left(\sigma^2_s2^{-2bR}>D_m\right)\!\biggr)}\!
\end{equation}
for $q_1^*$ satisfying \eqref{E129}.
\end{proposition}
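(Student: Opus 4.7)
The plan is to decompose $\Lambda(R)$ into two factors whose dependence on $R$ is qualitatively very different, and then to exploit their opposing monotonicities. By Proposition \ref{P3}, the allocation $\gamma^*(\alpha,R)$ is zero whenever $\alpha<(2^{2R}-1)/q_1^*$ and, when positive, exactly achieves rate $R$, so
\begin{equation*}
\text{Pr}\bigl(R\le C(\alpha,\gamma^*(\alpha,R))\bigr) \;=\; \text{Pr}\Bigl(\alpha\ge \tfrac{2^{2R}-1}{q_1^*}\Bigr),
\end{equation*}
which puts $\Lambda(R)$ into the exact product form displayed in Proposition \ref{P4}.

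Next I would examine each factor as a function of $R$. The second factor can be rewritten as
\begin{equation*}
\text{Pr}(\sigma^2_s>D_m)-\text{Pr}(\sigma^2_s 2^{-2bR}>D_m) \;=\; \sum_{s\in\mathcal{S}} P(s)\,\mathbf{1}\{D_m<\sigma^2_s\le D_m 2^{2bR}\},
\end{equation*}
a nondecreasing, right-continuous step function of $R$ whose jump points are precisely the values $\tfrac{1}{2b}\log_2(\sigma^2_s/D_m)$ for those states satisfying $\sigma^2_s\ge D_m$; between two consecutive such thresholds it is constant. The first factor, on the other hand, is nonincreasing in $R$: for any $R_1<R_2$, the allocation $\gamma^*(\alpha,R_2)$ remains feasible for the $R_1$-version of Problem \ref{PR3}, and since $\{R_2\le C(\alpha,\gamma^*(\alpha,R_2))\}\subseteq\{R_1\le C(\alpha,\gamma^*(\alpha,R_2))\}$, by optimality at $R_1$ the minimum channel-outage probability at $R_1$ is no larger than that at $R_2$.

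Combining these observations, on any maximal interval on which the second factor is constant, $\Lambda(R)$ is a nonnegative constant times a nonincreasing function of $R$ and is therefore maximized at the left endpoint, which is one of the threshold rates; for $R$ below the smallest threshold the second factor and hence $\Lambda$ vanish. Consequently any maximizer $R^*$ must lie in the set $\{\tfrac{1}{2b}\log_2(\sigma^2_s/D_m):s\in\mathcal{S},\;\sigma^2_s\ge D_m\}$, and the original search reduces to the finite optimization displayed in the statement, with $q_1^*$ at each candidate rate obtained from \eqref{E129}.

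The main obstacle, I expect, is justifying the monotonicity of the first factor cleanly. Since $q_1^*$ depends on $R$ only implicitly through \eqref{E129}, with no closed form for general $f(\alpha)$, a direct calculus-based argument via the dependence of $(2^{2R}-1)/q_1^*(R)$ on $R$ is not available; one must instead rely on the structural feasibility/optimality argument above, which sidesteps $q_1^*$ entirely and therefore applies uniformly across block fading distributions.
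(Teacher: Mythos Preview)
Your argument is correct and follows the same overall architecture as the paper's proof: rewrite $\Lambda(R)$ in the product form, observe that the source--side factor is a nondecreasing staircase with jumps at the threshold rates, observe that the channel--side factor is nonincreasing in $R$, and conclude that the maximum is attained at one of the thresholds. The one substantive difference is in how the monotonicity of the channel--side factor is established. The paper argues directly from \eqref{E129} that the threshold $(2^{2R}-1)/q_1^*$ is increasing in $R$ (since the left side of \eqref{E129} is increasing in $q_1^*$ and in $R$, so larger $R$ forces a larger threshold to keep the average power fixed), and hence $\text{Pr}\bigl(\alpha\ge (2^{2R}-1)/q_1^*\bigr)$ is decreasing. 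You instead use a feasibility/optimality argument at the level of Problem~\ref{PR3} itself: the optimal allocation at the larger rate is feasible at the smaller rate and already achieves a no--larger channel outage there, so the optimal outage at the smaller rate can only be smaller still. Your route is slightly more robust in that it never touches the implicit dependence $q_1^*(R)$ and would continue to hold even if the structure of $\gamma^*$ in Proposition~\ref{P3} were different; the paper's route, on the other hand, yields the additional quantitative fact that the cutoff $(2^{2R}-1)/q_1^*$ is itself monotone, which is occasionally useful downstream.
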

%
\begin{proof}
Using \eqref{E40} and \eqref{E28}, we have
\begin{equation}\label{E37}
C(\alpha,\gamma)=\begin{cases}
R &\text{if}\;\frac{2^{2R}-1}{\alpha}\leq q^*_1\\
0 &\text{if}\;\frac{2^{2R}-1}{\alpha}> q^*_1
\end{cases}
\end{equation}
Therefore,
\begin{align}\label{E41} \text{Pr}\left(R\leq
C\left(\alpha,\gamma\right)\right)=\text{Pr}\left(\alpha\geq\frac{2^{2R}-1}{q^*_1}\right)
\end{align}
Thus, the maximization in Problem \ref{PR6} is equivalent to
\begin{equation}\label{E50}
\underset{R}{\operatorname{max}}\;{\text{Pr}\left(\alpha\geq\frac{2^{2R}-1}{q^*_1}\right)\biggl(\!\text{Pr}\left(\sigma^2_s>D_m\right)\!-\!\text{Pr}\left(\sigma^2_s2^{-2bR}>D_m\right)\!\biggr)}\\
\end{equation}
where $q_1^*$ satisfies \eqref{E129}.

It is clear in \eqref{E50} that
$\text{Pr}\left(\sigma^2_s>D_m\right)-\text{Pr}\left(\sigma^2_s2^{-2bR}>D_m\right)$
is an ascending (stair case like) function of $R$ with discontinuity
at $R=\frac{1}{2b}\log_2\frac{\sigma^2_{s}}{D_m}; \forall
s\in{\mathcal{S}},\;\sigma^2_s\geq D_m$. On the other hand, noting
\eqref{E129}, $\frac{\left(2^{2R}-1\right)}{q^*_1}$ increases as $R$
grows; and therefore
$\text{Pr}\left(\alpha\geq\frac{2^{2R}-1}{q^*_1}\right)$ decreases
as $R$ increases. Thus, the objective function in Problem \ref{PR6}
has discontinuities at
$R=\frac{1}{2b}\log_2\frac{\sigma^2_{s}}{D_m}; \forall
s\in{\mathcal{S}},\;\sigma^2_s\geq D_m$ and also is a descending
function of $R$ between two subsequent discontinuity points. Hence,
we can conclude that the maximum of \eqref{E50} occurs at one of
these discontinuity points.
\end{proof}

The next two Propositions quantify the performance of the proposed
COPA-MDO scheme in terms of the resulting distortion outage
probability and outage distortion exponent, respectively.
\begin{proposition}\label{P9}
The distortion outage probability obtained by COPA-MDO scheme for
transmission of a quasi-stationary source over a Rayleigh block
fading channel is given by
\begin{align}\label{E57}
\text{P}_\text{Dout}=\left(1-\exp\left(-\frac{2^{2R^*}-1}{q^*_1}\right)\right)\text{Pr}(\sigma^2_s>D_m)
+\exp\left(-\frac{2^{2R^*}-1}{q^*_1}\right)\text{Pr}(\sigma^2_s2^{-2bR^*}>D_m)
\end{align}
with $q_1^*$ satisfying the following equation
\begin{equation}\label{E59}
\left(2^{2R^*}-1\right)E_1\left(\frac{2^{2R^*}-1}{q^*_1}\right)=\bar{P}
\end{equation}
 and $R^*$ given in Proposition \ref{P4}.
\end{proposition}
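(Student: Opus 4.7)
The plan is to obtain the claimed expression by a direct substitution of the already-derived optimum power allocation and rate into the general outage formula, then specializing to the Rayleigh case. The only real work is checking that the pieces fit together cleanly; there is no new optimization to perform here.

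First I would start from the general expression \eqref{E75} for $\text{P}_\text{Dout}$, which decomposes the event $\{D(\sigma_s,\alpha,\gamma)>D_m\}$ according to whether a channel outage occurs. Because $\sigma_s$ (a property of the source) and $\alpha$ (the channel gain) are independent, and because the optimum power $\gamma^*(\alpha,R)$ in Proposition \ref{P3} is a function of $\alpha$ alone (given $R$), the two probabilities in \eqref{E75} factor as shown. So the task reduces to evaluating $\text{Pr}\bigl(R>C(\alpha,\gamma^*(\alpha,R))\bigr)$ at $R=R^*$ under Rayleigh fading.

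Next, I would invoke \eqref{E41} from the proof of Proposition \ref{P4}, which states that under the optimum power allocation $\text{Pr}\bigl(R\leq C(\alpha,\gamma^*)\bigr)=\text{Pr}\bigl(\alpha\geq (2^{2R}-1)/q_1^*\bigr)$, so the no-outage probability is simply the tail probability of $\alpha$ above the threshold $(2^{2R}-1)/q_1^*$. For a Rayleigh block fading channel with $|\alpha|^2$ exponentially distributed and $E[\alpha]=1$, this tail probability is $\exp\bigl(-(2^{2R}-1)/q_1^*\bigr)$, and consequently the outage probability equals $1-\exp\bigl(-(2^{2R}-1)/q_1^*\bigr)$. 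Substituting $R=R^*$ then gives the two terms displayed in \eqref{E57}.

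Finally, I would note that the threshold parameter $q_1^*$ is determined by the power constraint being tight. At $R=R^*$, \eqref{E129} specializes in the Rayleigh case to the integral identity \eqref{E54}, which yields exactly \eqref{E59}. The choice of $R^*$ itself is supplied by Proposition \ref{P4} and is simply inherited. I expect no real obstacle in this argument; the only point requiring care is the implicit appeal to independence of the source state $s$ and the fading $\alpha$, which justifies the factorization in \eqref{E75} and hence the final formula.
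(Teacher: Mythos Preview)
Your proposal is correct and matches the paper's own proof essentially step for step: invoke \eqref{E41} to rewrite the channel-outage probability as a tail probability of $\alpha$, evaluate it as $\exp\bigl(-(2^{2R^*}-1)/q_1^*\bigr)$ under Rayleigh fading, and substitute into \eqref{E75}. The paper's argument is just a terse version of yours; your additional remarks on independence and on \eqref{E59} arising from \eqref{E54} are accurate but not strictly needed, since both are already established upstream.
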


\begin{proof}
Noting \eqref{E41} and with exponentially distributed channel gain,
we obtain
\begin{align}\label{E33} \text{Pr}\left(R^*\leq
C\left(\alpha,\gamma\right)\right)=\exp\left(-\frac{2^{2R^*}-1}{q^*_1}\right),
\end{align}
where $R^*$ is given in Proposition \ref{P4}. Hence, equation
\eqref{E57} is derived by replacing \eqref{E33} in \eqref{E75}.
\end{proof}
As expected, the power allocation and hence the resulting distortion outage
probability are functions of the optimized fixed rate.

\begin{proposition}\label{P13}
For communication of a quasi-stationary source over a Rayleigh block
fading channel, the COPA-MDO scheme achieves the outage distortion
exponent $\Delta_{\textit{OD}}$ of the order
$O\left(\frac{\bar{P}}{\ln{\bar{P}}}\right)$ for large average power
limit $\bar{P}$.
\end{proposition}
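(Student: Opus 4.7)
The plan is to start from the closed-form expression for $P_\text{Dout}$ in Proposition \ref{P9} and the implicit relation \eqref{E59} for $q_1^*$, then extract the dominant asymptotic behavior as $\bar{P}\to\infty$ using the small-argument approximations \eqref{E98} and \eqref{E99}. First I would observe that $R^*$ lies in the finite, $\bar{P}$-independent set $\{\frac{1}{2b}\log_2\frac{\sigma^2_s}{D_m}:s\in\mathcal{S},\sigma^2_s\geq D_m\}$ from Proposition \ref{P4}. Since $q_1^*$ increases without bound as $\bar{P}$ grows (for any fixed $R$ in that set), the factor $\operatorname{Pr}(\alpha\geq (2^{2R}-1)/q_1^*)$ tends to one, and hence for all sufficiently large $\bar{P}$ the maximizer in \eqref{E50} is the largest element $R_{\max}:=\frac{1}{2b}\log_2\frac{\sigma^2_{s_{\max}}}{D_m}$, where $\sigma^2_{s_{\max}}=\max_{s\in\mathcal{S}}\sigma^2_s$. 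With this choice, $\operatorname{Pr}(\sigma^2_s2^{-2bR^*}>D_m)=0$, so the second term in \eqref{E57} vanishes and only the first term survives.

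Next I would characterize $q_1^*$ in the high-$\bar{P}$ regime. Writing $y:=(2^{2R_{\max}}-1)/q_1^*$, equation \eqref{E59} becomes $E_1(y)=\bar{P}/(2^{2R_{\max}}-1)$, so $y\to 0$ as $\bar{P}\to\infty$. Applying \eqref{E98} gives
\begin{equation*}
-E_c-\ln y \;\approx\; \frac{\bar{P}}{2^{2R_{\max}}-1},
\end{equation*}
which yields $y\approx e^{-E_c}\exp\bigl(-\bar{P}/(2^{2R_{\max}}-1)\bigr)$. Then by \eqref{E99},
\begin{equation*}
1-\exp(-y)\;\approx\;y\;\approx\;e^{-E_c}\exp\!\left(-\frac{\bar{P}}{2^{2R_{\max}}-1}\right).
\end{equation*}

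Substituting back into \eqref{E57} (with the second term gone) gives
\begin{equation*}
\mathrm{P}_\text{Dout}\;\approx\;e^{-E_c}\operatorname{Pr}(\sigma^2_s>D_m)\exp\!\left(-\frac{\bar{P}}{2^{2R_{\max}}-1}\right),
\end{equation*}
so $-\ln \mathrm{P}_\text{Dout}=\bar{P}/(2^{2R_{\max}}-1)+O(1)$, and feeding this into \eqref{E83} produces $\Delta_{OD}=\Theta(\bar{P}/\ln\bar{P})$, which is the claim.

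The main obstacle I anticipate is the step that pins down $R^*$ at $R_{\max}$ for all sufficiently large $\bar{P}$: this requires showing that the gain from the second factor in \eqref{E50} (which strictly increases by a jump whenever $R$ crosses a discontinuity point) eventually dominates the vanishing loss in the first factor $\operatorname{Pr}(\alpha\geq(2^{2R}-1)/q_1^*)\to 1$. A clean way is to note that the loss in the first factor across the finitely many candidate values of $R$ is uniformly $o(1)$ as $\bar{P}\to\infty$ because $q_1^*\to\infty$, while the incremental gain at each discontinuity is a positive constant independent of $\bar{P}$. Beyond that, the remainder is bookkeeping with \eqref{E98}–\eqref{E99}, and care is only needed to verify that the error in the approximation \eqref{E98} is $o(\bar{P})$ so that it does not spoil the scaling law.
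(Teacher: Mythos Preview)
Your proposal is correct and follows essentially the same route as the paper: both use \eqref{E98}--\eqref{E99} to invert \eqref{E59} in the regime $(2^{2R^*}-1)/q_1^*\to 0$, identify $R^*=\frac{1}{2b}\log_2\frac{\max_s\sigma_s^2}{D_m}$ so that the second term of \eqref{E57} vanishes, and then read off $-\ln\mathrm{P}_\text{Dout}\sim \bar P/(2^{2R^*}-1)$. The only cosmetic differences are that you pin down $R^*=R_{\max}$ before doing the asymptotics (the paper first derives the asymptotic form \eqref{E117} and then selects $R^*$), and you retain the harmless constant $e^{-E_c}$ that the paper drops in passing.
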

\begin{proof}
To compute $\Delta_{\textit{OD}}$ as defined in \eqref{E83}, we
first note that according to \eqref{E59} and Proposition \ref{P4}
for large enough $\bar{P}$, we have
$\frac{2^{2R^*}-1}{q_1^*}\rightarrow 0$. Thus, using the
approximation given in \eqref{E98}, \eqref{E59} can be rewritten as
follows.
\begin{equation}
-\left(2^{2R^*}-1\right)\left(\ln\frac{2^{2R^*}-1}{q_1^*}+E_c\right)\cong\bar{P}
\end{equation}
or equivalently
\begin{equation}\label{E100}
q_1^*\cong\left(2^{2R^*}-1\right)\exp\left(\frac{\bar{P}}{2^{2R^*}-1}\right).
\end{equation}
Therefore, using \eqref{E99} and \eqref{E100} and noting
$\frac{2^{2R^*}-1}{q_1^*}\rightarrow 0$, \eqref{E57} can be
rewritten as follows

\begin{equation}\label{E117}
\begin{split}
\text{P}_\text{Dout}\cong
\text{Pr}\left(\sigma^2_s>D_m\right)\exp\left(-\frac{\bar{P}}{2^{2R^*}-1}\right)+\left(1-\exp\left(-\frac{\bar{P}}{2^{2R^*}-1}\right)\right)\text{Pr}(\sigma^2_s2^{-2bR^*}>D_m).
\end{split}
\end{equation}
Noting Proposition \ref{P4} or equivalently minimizing \eqref{E117}
for large power constraint, it is observed that the optimum $R^*$
have to be set such that $\text{Pr}(\sigma^2_s2^{-2bR^*}>D_m)=0$,
i.e.,
\begin{equation}
R^*=\frac{1}{2b}\log_2\frac{\underset{s}{\operatorname{max}}\{\sigma^2_s\}}{D_m}
\end{equation}
Thus, we have
\begin{equation}\label{E69}
\begin{split}
\text{P}_\text{Dout}\cong
\text{Pr}\left(\sigma^2_s>D_m\right)\exp\left(\frac{-\bar{P}}{\left(\frac{\underset{s}{\operatorname{max}}\{\sigma^2_s\}}{D_m}\right)^{\frac{1}{b}}-1}\right)
\end{split}
\end{equation}
and therefore,\vspace{-4pt}
\begin{equation}\label{E26}
\begin{split}
\Delta_{\textit{OD}}=\underset{\bar{P}\rightarrow\infty}{\lim}{\frac{\frac{\bar{P}}{\left(\frac{\underset{s}{\operatorname{max}}\{\sigma^2_s\}}{D_m}\right)^{\frac{1}{b}}-1}-\ln{\text{Pr}\left(\sigma^2_s>D_m\right)}}{\ln{\bar{P}}}}.
\end{split}
\end{equation}
Hence, the proof is complete.
\end{proof}
For the optimized fixed rate $R^*$, the distortion outage exponent enhances when the average power limit $\bar{P}$ increases.

In the following three Corollaries, we summarize the implications of
the COPA-MDO design for transmission of stationary sources over
block fading channels. The stationary source is a Gaussian with zero
mean and variance $\sigma^2\geq D_m$. Obviously, with
$\sigma^2<D_m$, the distortion outage probability is equal to zero.
The results are directly obtained from Propositions \ref{P3},
\ref{P4}, \ref{P9} and \ref{P13} and allows for insights into the
system performance as it relates to source statistical
characteristics.
\begin{corollary}\label{P10}
The optimum power adaptation and channel coding rate prescribed by
COPA-MDO for transmission of a stationary source over a block fading
channel are given by
\begin{align}\label{E8}
R^*=\frac{1}{2b}\log_2{\frac{\sigma^2}{D_m}}
\end{align}
and\vspace{-4pt}
\begin{align}\label{E9}
\gamma^*=\begin{cases}\frac{\left(\frac{\sigma^2}{D_m}\right)^\frac{1}{b}-1}{q^*_1}&\text{if
}\alpha\geq\frac{\left(\frac{\sigma^2}{D_m}\right)^\frac{1}{b}-1}{q^*_1}
\\0&\text{otherwise}\end{cases}
\end{align}
in which $q^*_1$ is selected such that the power constraint is
satisfied with equality, i.e.,
\begin{equation}\label{E3}
{\int_{\alpha\geq\frac{\left(\frac{\sigma^2}{D_m}\right)^\frac{1}{b}-1}{q^*_1}}}{\frac{\left(\frac{\sigma^2}{D_m}\right)^\frac{1}{b}-1}{\alpha}
f(\alpha)\,\mathrm{d}\alpha}=\bar{P}.
\end{equation}\vspace{-6pt}
Specifically for Rayleigh block fading channel, \eqref{E3} is
simplified to
\begin{equation}\label{E7}
\left(\left(\frac{\sigma^2}{D_m}\right)^\frac{1}{b}-1\right)E_1\left(\frac{\left(\frac{\sigma^2}{D_m}\right)^\frac{1}{b}-1}{q^*_1}\right)=\bar{P}.
\end{equation}
\end{corollary}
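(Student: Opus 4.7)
The plan is to obtain Corollary~\ref{P10} as a direct specialization of Propositions~\ref{P3} and~\ref{P4}, by viewing the stationary Gaussian source of variance $\sigma^2\geq D_m$ as the degenerate quasi-stationary source of Section~\ref{SII} with $N_s=1$ and $\sigma_s^2\equiv\sigma^2$. Since the quasi-stationary framework is agnostic to $N_s$, both Propositions~\ref{P3} and~\ref{P4} apply without modification, and the corollary reduces to a substitution exercise.

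First, I would invoke Proposition~\ref{P4} to pin down the optimum fixed rate. That result restricts the maximizer of $\Lambda(R)$ to the discrete set of discontinuity points $\{\tfrac{1}{2b}\log_2(\sigma_s^2/D_m):\,s\in\mathcal{S},\,\sigma_s^2\geq D_m\}$. Under the stationary hypothesis with $\sigma^2\geq D_m$, this set collapses to the singleton $\{\tfrac{1}{2b}\log_2(\sigma^2/D_m)\}$, so the maximization is trivial and yields \eqref{E8} immediately. It is worth remarking in passing that, in the stationary case, the ascending ``staircase'' factor in \eqref{E50} has a single jump, so no comparison across candidate rates is required.

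Next I would substitute this $R^{*}$ into the optimum power profile \eqref{E28} of Proposition~\ref{P3}. Using the identity $2^{2R^{*}}=(\sigma^2/D_m)^{1/b}$, the cutoff $(2^{2R}-1)/q_1^{*}$ of \eqref{E28} becomes $((\sigma^2/D_m)^{1/b}-1)/q_1^{*}$ and the active-set expression for $\gamma^{*}$ takes the form stated in \eqref{E9}. The general power-balance relation \eqref{E129}, evaluated at $R=R^{*}$, specializes term-by-term to \eqref{E3}; and for a Rayleigh block fading channel, the analogous substitution into the closed-form condition \eqref{E54} reproduces \eqref{E7}.

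There is essentially no obstacle in this argument: once the candidate-rate set of Proposition~\ref{P4} is seen to collapse to a single point under stationarity with $\sigma^2\geq D_m$, the rest is a mechanical substitution of $R^{*}$ into the formulas already established. The only step that warrants an explicit (but very short) verification is this initial reduction, namely that the stationary source is indeed a legitimate instance of the quasi-stationary model with $N_s=1$, so that the hypotheses underlying Propositions~\ref{P3} and~\ref{P4} remain in force.
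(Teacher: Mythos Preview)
Your approach is correct and is exactly what the paper does: it states that the corollary is ``directly obtained from Propositions~\ref{P3}, \ref{P4}, \ref{P9} and \ref{P13}'', and your specialization to the degenerate $N_s=1$ case with $\sigma_s^2\equiv\sigma^2\geq D_m$ is precisely that direct derivation (with Propositions~\ref{P3} and~\ref{P4} being the only two actually needed here). One small caveat: the literal substitution of $R^*$ into \eqref{E28} yields $\gamma^*=\big((\sigma^2/D_m)^{1/b}-1\big)/\alpha$ on the active set, so the $q_1^*$ appearing in the numerator of the first branch of \eqref{E9} is evidently a typographical slip in the statement rather than a flaw in your argument.
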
\vspace{-2pt}
Thus, for transmission of a stationary source over a block fading
channel, the optimized fixed rate is simply set such that the source
coding distortion is equal to its maximum $D_m$.
\begin{corollary}\label{P19}
The distortion outage probability obtained by COPA-MDO for
transmission of a stationary source over a Rayleigh block fading
channel is given by\vspace{-2pt}
\begin{align}\label{E51}
\text{P}_\text{Dout}=1-\exp\left(-\frac{\left(\frac{\sigma^2}{D_m}\right)^\frac{1}{b}-1}{q^*}\right)
\end{align}
with $q^*$ satisfying the following equation\vspace{-4pt}
\begin{equation}\label{E34}
\left(\left(\frac{\sigma^2}{D_m}\right)^\frac{1}{b}-1\right)E_1\left(\frac{\left(\frac{\sigma^2}{D_m}\right)^\frac{1}{b}-1}{q^*}\right)=\bar{P}.
\end{equation}
\end{corollary}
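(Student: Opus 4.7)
The plan is to obtain this corollary as a direct specialization of Proposition \ref{P9} using the optimized rate prescribed by Corollary \ref{P10} for a stationary source. Since the source is Gaussian with a single fixed variance $\sigma^2 \geq D_m$, the source state is effectively degenerate, and the source state probability $\text{Pr}(\sigma_s^2 > D_m)$ collapses to $1$. The case $\sigma^2 < D_m$ is explicitly set aside in the paragraph preceding the corollary since it trivially gives $\text{P}_\text{Dout}=0$.

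Next I would invoke Corollary \ref{P10}, which yields $R^* = \frac{1}{2b}\log_2(\sigma^2/D_m)$, and therefore $2^{2R^*} = (\sigma^2/D_m)^{1/b}$ and $\sigma^2\, 2^{-2bR^*} = D_m$ exactly. Consequently $\text{Pr}\bigl(\sigma_s^2\, 2^{-2bR^*} > D_m\bigr) = \text{Pr}(D_m > D_m) = 0$, so the second summand in \eqref{E57} vanishes and the first summand reduces to $1 - \exp\bigl(-(2^{2R^*}-1)/q_1^*\bigr)$. Substituting $2^{2R^*}-1 = (\sigma^2/D_m)^{1/b} - 1$ and relabeling $q^* := q_1^*$ immediately produces \eqref{E51}.

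Finally, the constraint equation \eqref{E34} follows identically by substituting the same $R^*$ into the Rayleigh constraint \eqref{E59} from Proposition \ref{P9}, so no new analytical argument is required. There is no real obstacle here: the entire result is a clean corollary in which the two state-dependent probability factors in Proposition \ref{P9} collapse to $1$ and $0$ under the stationary-source assumption. The only mild care needed is to verify that the optimized fixed rate from Corollary \ref{P10} lies in the admissible set $\{\tfrac{1}{2b}\log_2(\sigma_s^2/D_m)\}$ described in Proposition \ref{P4}, which is automatic since the set has a single element in the stationary case.
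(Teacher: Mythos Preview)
Your proposal is correct and follows exactly the paper's approach: the paper simply remarks that the corollary is obtained from Proposition~\ref{P9} by noting that $\sigma^2>D_m$ and $\text{Pr}(\sigma^2_s2^{-2bR^*}>D_m)=0$ for stationary sources, which is precisely the specialization you carry out (with the optimized rate from Corollary~\ref{P10} supplying the value of $R^*$). Your write-up is in fact more explicit than the one-line justification in the paper.
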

The results in Corollary \ref{P19} is obtained noting that
$\sigma^2>D_m$ and $\text{Pr}(\sigma^2_s2^{-2bR^*}>D_m)=0$ for
stationary sources.

\begin{corollary}\label{P21}
For communication of a stationary source over a Rayleigh block
fading channel and with large average power limit $\bar{P}$, the
COPA-MDO scheme achieves the outage distortion exponent
$\Delta_{\textit{OD}}=\underset{\bar{P}\rightarrow\infty}{\lim}{\frac{\bar{P}}{\ln\bar{P}\left(\left(\frac{\sigma^2}{D_m}\right)^{\frac{1}{b}}-1\right)}}$
of the order $O\left(\frac{\bar{P}}{\ln{\bar{P}}}\right)$.
\end{corollary}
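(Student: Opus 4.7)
The plan is to derive this as the stationary-source specialization of Proposition \ref{P13}, whose asymptotic distortion outage expression \eqref{E69} has already carried out all the heavy lifting via the approximations \eqref{E98} and \eqref{E99}. For a stationary source with variance $\sigma^2 \geq D_m$, the set $\mathcal{S}$ is a singleton, so Proposition \ref{P4} forces $R^* = \frac{1}{2b}\log_2(\sigma^2/D_m)$; consequently $\text{Pr}(\sigma_s^2 > D_m) = 1$ (the boundary case $\sigma^2 = D_m$ gives trivially zero outage distortion) and $\text{Pr}(\sigma_s^2 2^{-2bR^*} > D_m) = 0$.

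Substituting these two probabilities into \eqref{E69} collapses the expression to
\begin{equation*}
\text{P}_\text{Dout} \cong \exp\left(-\frac{\bar{P}}{(\sigma^2/D_m)^{1/b} - 1}\right),
\end{equation*}
and plugging this into the definition \eqref{E83} produces
\begin{equation*}
\Delta_{\textit{OD}} = \lim_{\bar{P} \to \infty} \frac{\bar{P}}{\ln \bar{P}\,\bigl((\sigma^2/D_m)^{1/b} - 1\bigr)},
\end{equation*}
with the $O(\bar{P}/\ln \bar{P})$ scaling immediate from the form of the limit.

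As an independent sanity check, I would rederive the same limit directly from Corollary \ref{P19}. Writing $A := (\sigma^2/D_m)^{1/b} - 1$, equations \eqref{E51} and \eqref{E34} read $\text{P}_\text{Dout} = 1 - \exp(-A/q^*)$ and $A\, E_1(A/q^*) = \bar{P}$. Since $A$ is a fixed positive constant and $E_1$ is strictly decreasing with $E_1(x)\to\infty$ as $x\to 0^+$, the implicit equation forces $A/q^* \to 0$ as $\bar{P} \to \infty$. Applying \eqref{E98} then solves $q^* \cong A\, e^{E_c}\exp(\bar{P}/A)$, and applying \eqref{E99} yields $\text{P}_\text{Dout} \cong A/q^* \cong e^{-E_c}\exp(-\bar{P}/A)$; the additive $E_c$ constant is absorbed after dividing by $\ln \bar{P}$, reproducing the claimed limit.

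I do not foresee any genuine obstacle: the asymptotic analysis has already been performed in Proposition \ref{P13}, and the present statement amounts to a direct specialization. The only point requiring explicit attention is the monotonicity argument above, which ensures that the branch $q^* \to \infty$ is the unique and well-defined solution of \eqref{E34} in the large-$\bar{P}$ regime and hence that the small-argument expansions \eqref{E98} and \eqref{E99} are legitimately applicable.
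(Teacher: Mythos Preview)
Your proposal is correct and follows essentially the same route as the paper: the paper's one-line justification is that the result ``immediately follows the proof of Proposition \ref{P13} and noting that for stationary sources $R^*=\frac{1}{2b}\log_2\frac{\sigma^2}{D_m}$,'' which is precisely your first derivation via \eqref{E69}. Your additional sanity check through Corollary \ref{P19} is a nice independent verification but is not present (nor needed) in the paper.
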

This immediately follows the proof of Proposition \ref{P13} and noting
that for stationary sources $R^*=\frac{1}{2b}\log_2{\frac{\sigma^2}{D_m}}$.

The performance of COPA-MDO scheme is studied and compared in
Section \ref{SV}.
\section{Source and Channel Optimized Power and Rate Adaptation}\label{SIV}
In this section, we consider power and rate adaptation with regard
to source and channel states for improved performance of
communications of a quasi-stationary source over a wireless block
fading channel. Thus, the objective in this section is to devise
power and rate adaptation strategies for each state $(s,\alpha)$
such that the distortion outage probability is minimized, when the
average power is constrained to $\bar{P}$. We have the following
design problem.
\begin{problem}\label{PR7}
The problem of delay-limited source and channel optimized power
adaptation for transmission of a quasi-stationary source with
minimum distortion outage probability (SCOPA-MDO) over a block
fading channel is formulated as follows
\begin{equation}\label{E48}
\begin{split}
&\underset{\gamma,R}{\operatorname{min}}\;\text{P}_\text{Dout}=\text{Pr}\left(D(\sigma_s,\alpha,\gamma)>D_m\right)\\
&\text{subject to}\; \text{E}[\gamma]<\bar{P}{,}
\end{split}
\end{equation}
where $D(\sigma_s,\alpha,\gamma)$ is given in \eqref{E10}.
\end{problem}

\begin{proposition}\label{P7}
The solution to Problem \ref{PR7} for an arbitrary block fading
channel is given by
\begin{align}\label{E14}
\gamma^*{(\sigma_s,\alpha)}=\begin{cases}
\frac{1}{\alpha}\left[\left(\frac{\sigma^2_s}{D_m}\right)^{\frac{1}{b}}-1\right]^+
&\text{if}\;\;\frac{1}{\alpha}\left[\left(\frac{\sigma^2_s}{D_m}\right)^{\frac{1}{b}}-1\right]^+\leq q^*_2\\
0 &\text{Otherwise}\\
\end{cases}
\end{align}\vspace{-2pt}
and
\begin{equation}\label{E35}
\begin{split}
R=&\begin{cases} \frac{1}{2b} \log\frac{\sigma^2_s}{D_m} &\text{if}
\;\frac{\sigma^2_s}{D_m}> 1
\;\text{and}\;\frac{1}{\alpha}\left[\left(\frac{\sigma^2_s}{D_m}\right)^{\frac{1}{b}}-1\right]<
q^*_2\\
0 &\text{Otherwise},
\end{cases}
\end{split}
\end{equation}
with $q^*_2$ satisfying the following equation
\begin{equation}\label{E19}
\underset{s:\sigma^2_s>D_m}{\sum}\underset{\alpha:\frac{1}{\alpha}\left[\left(\frac{\sigma^2_s}{D_m}\right)^{\frac{1}{b}}-1\right]\leq
q^*_2}{\int}{\frac{\left(\frac{\sigma^2_s}{D_m}\right)^{\frac{1}{b}}-1}{\alpha}
f(\alpha)\,P(s)\mathrm{d}\alpha}=\bar{P}.
\end{equation}
\end{proposition}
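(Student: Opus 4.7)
The plan is to decompose the joint optimization over $(\gamma,R)$ into a pointwise (per-state) choice and then solve the resulting 0/1 allocation problem with a standard Lagrangian / threshold argument.

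First, I would fix a state $(s,\alpha)$ and identify the feasible ways to avoid a distortion outage in that state. By \eqref{E10}, no outage occurs iff both $R\le C(\alpha,\gamma)$ and $\sigma_s^2 2^{-2bR}\le D_m$. If $\sigma_s^2\le D_m$, the second inequality holds already at $R=0$, so one can take $(\gamma,R)=(0,0)$ with zero power and still avoid outage. If $\sigma_s^2>D_m$, the second condition forces $R\ge\frac{1}{2b}\log_2(\sigma_s^2/D_m)$, and combining with $R\le C(\alpha,\gamma)=\frac12\log_2(1+\alpha\gamma)$ yields the minimum outage-free power
\[
\gamma_{\min}(s,\alpha)=\frac{1}{\alpha}\!\left[(\sigma_s^2/D_m)^{1/b}-1\right],
\]
attained precisely at $R=\frac{1}{2b}\log_2(\sigma_s^2/D_m)$. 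Therefore, in each state, the only choices that can be optimal are either (i) transmit with the minimum outage-free pair $(\gamma_{\min},R_{\min})$, or (ii) don't transmit ($\gamma=0$, $R=0$); anything else either wastes power or incurs outage without spending less power.

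Next I would encode this by a binary variable $I(s,\alpha)\in\{0,1\}$ that is 1 in case (i). The outage probability becomes
\[
\text{P}_\text{Dout}=\Pr(\sigma_s^2>D_m)-\!\!\sum_{s:\sigma_s^2>D_m}\!P(s)\int f(\alpha)\,I(s,\alpha)\,\mathrm{d}\alpha,
\]
and the power constraint becomes $\sum_{s:\sigma_s^2>D_m}P(s)\int f(\alpha)I(s,\alpha)\gamma_{\min}(s,\alpha)\,\mathrm{d}\alpha\le\bar P$. Minimizing $\text{P}_\text{Dout}$ is thus a continuous 0/1 knapsack: pack as much probability mass as possible subject to the power budget. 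Introducing a Lagrange multiplier $\lambda>0$, the pointwise Lagrangian $[1-\lambda\gamma_{\min}(s,\alpha)]\,P(s)f(\alpha)$ is maximized by $I(s,\alpha)=1$ iff $\gamma_{\min}(s,\alpha)\le 1/\lambda=:q_2^*$, which is exactly the rule in \eqref{E14}--\eqref{E35}. The multiplier, equivalently the threshold $q_2^*$, is then fixed by activating the power constraint, giving \eqref{E19}.

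The only steps I would treat carefully are the exchange-type arguments behind the reduction to $I\in\{0,1\}$ and the uniqueness of the threshold. For the former, one shows that relaxing to $I(s,\alpha)\in[0,1]$ admits an extremal optimum by continuity of $f$ (so the level set $\{\gamma_{\min}=q_2^*\}$ has probability zero and fractional values occur on a null set). For the latter, monotonicity of the total expended power as $q_2^*$ varies gives existence and essential uniqueness of the solution to \eqref{E19}; and the complementary slackness argument confirms the constraint must be active at the optimum (otherwise enlarging the transmission region strictly reduces $\text{P}_\text{Dout}$). The $[\cdot]^+$ in \eqref{E14} conveniently absorbs the trivial case $\sigma_s^2\le D_m$, under which $\gamma_{\min}=0\le q_2^*$ and no outage is incurred. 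I expect the main obstacle to be stating the Lagrangian/threshold reduction cleanly rather than any hard calculation.
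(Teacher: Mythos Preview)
Your proposal is correct and follows essentially the same approach as the paper: reduce each state $(s,\alpha)$ to the minimum outage-free power, then apply a Lagrangian/threshold rule to decide where to transmit. The paper phrases the second step through a probabilistic allocation $\omega(\sigma_s,\alpha)\in(0,1]$ that is shown to be extremal (0 or 1) at the optimum, while you start from the binary indicator $I(s,\alpha)$ and mention the $[0,1]$ relaxation only as a regularity cleanup---but the substance is identical.
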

\begin{proof}
The transmission rate may be controlled with respect to the channel
state. Specifically, it is logical to set the rate to its maximum as
follows
\begin{equation}\label{E12}
R=C\left(\alpha,\gamma\right)=\frac{1}{2}\log(1+\alpha\gamma)
\end{equation}
and therefore, we have
\begin{align}\label{E24}
D(\sigma_s,\alpha,\gamma)=\sigma^2_s
2^{-2bR}=\frac{\sigma^2_s}{(1+\alpha\gamma)^b}.
\end{align}
Let $\gamma$ represent the probabilistically adapted transmission
power. Obviously with distortion outage, $\gamma$ is set to zero and
without distortion outage, the power is set to $\acute{\gamma}$ such
that $D(\sigma_s,\alpha,\acute\gamma)\leq D_m$. We define
$\omega(\sigma_s,\alpha)$ to be the probability that (noting the
channel SNR) the power is not zero in state $\sigma_s$ and $\alpha$.
We have,
\begin{equation}\label{E39}
\gamma(\sigma_s,\alpha)=\begin{cases} \acute\gamma(\sigma_s,\alpha)
&\text{with probability}
\;\omega(\sigma_s,\alpha)\\
0 &\text{with probability}\; 1-\omega(\sigma_s,\alpha).
\end{cases}
\end{equation}
Thus, the distortion outage probability and the average transmitted
power, respectively are $1-E[\omega(\sigma_s,\alpha)]$ and
$E[\omega(\sigma_s,\alpha)\acute\gamma(\sigma_s,\alpha)]$. Hence,
using a Lagrangian optimization approach, the problem may be
restated as follows in which the Lagrangian variable $\lambda$ is
set such that the power constraint in \eqref{E48} is satisfied with
equality.
\begin{align}\label{E25}
\underset{{\omega(\sigma_s,\alpha),\acute\gamma(\sigma_s,\alpha)
}}{\operatorname{max}}&{E[\omega(\sigma_s,\alpha)]-\lambda
E[\omega(\sigma_s,\alpha)\acute\gamma(\sigma_s,\alpha)]}\quad\text{subject to}\\
&{C_1:
\frac{\sigma^2_s}{(1+\alpha\acute\gamma(\sigma_s,\alpha))^b}\leq
D_m}\nonumber\\
&{C_2:
 0<\omega(\sigma_s,\alpha)\leq1}\nonumber.
\end{align}
Note that $\omega(\sigma_s,\alpha)=0$ is a trivial case that is
excluded in $C_2$ above. Due to the fact that $C_1$ (representing
$D(\sigma_s,\alpha,\acute\gamma)\leq D_m$) and $C_2$ are convex, the
solution to the above for $\lambda>0$ is globally optimal. To solve
this problem, we first find ${\gamma}^*(\sigma_s,\alpha)$ to
maximize \eqref{E25} for a given $\omega(\sigma_s,\alpha)$. Then for
the resulting power adaptation strategy, we obtain
${\omega}^*(\sigma_s,\alpha)$ to maximize \eqref{E25}. Thus,
${\gamma}^*(\sigma_s,\alpha)$ is the solution to
\begin{equation}\label{E13}
\begin{split}
\underset{{\acute\gamma(\sigma_s,\alpha)}
}{\operatorname{min}}&{\lambda
E[\omega(\sigma_s,\alpha)\acute\gamma(\sigma_s,\alpha)]}\quad\text{subject to}\\
&C_1:
\frac{\sigma^2_s}{(1+\alpha\acute\gamma(\sigma_s,\alpha))^b}\leq
D_m.
\end{split}
\end{equation}
Noting $\lambda>0$ and $0<\omega(\sigma_s,\alpha)\leq1$, \eqref{E13}
is equivalent to the following
\begin{equation}\label{E27}
\begin{split}
&\operatorname{min}{\acute\gamma(\sigma_s,\alpha)} \\
&\text{subject to}\;
\frac{\sigma^2_s}{(1+\alpha\acute\gamma(\sigma_s,\alpha))^b} \leq
D_m.
\end{split}
\end{equation}
Obviously, as ${\gamma}^*(\sigma_s,\alpha)$, the solution to the
problem above, is to be nonnegative, we have
\begin{equation}\label{E16}
{\gamma}^*(\sigma_s,\alpha)=\frac{1}{\alpha}\left[\left(\frac{\sigma^2_s}{D_m}\right)^{\frac{1}{b}}-1\right]^+.
\end{equation}
Now, given the power adaptation ${\gamma}^*(\sigma_s,\alpha)$, the
optimization problem \eqref{E25} may be rewritten as follow
\begin{equation}\label{E31}
\underset{0\leq\omega(\sigma_s,\alpha)\leq1}{\operatorname{max}}{\text{E}\left[\omega(\sigma_s,\alpha)\left(1-\lambda
{\gamma}^*(\sigma_s,\alpha)\right)\right]}.
\end{equation}
The solution to \eqref{E31} is given by
\begin{equation}\label{E22}
{\omega}^*(\sigma_s,\alpha)=\begin{cases}
1 &\text{if}\;\;{\gamma}^*{(\sigma_s,\alpha)}\leq\frac{1}{\lambda} \\
0 &\text{if}\;\;{\gamma}^*{(\sigma_s,\alpha)}>\frac{1}{\lambda}{.}
\end{cases}
\end{equation}
Therefore, noting \eqref{E16}, \eqref{E22} and \eqref{E39}, we
obtain \eqref{E14}. Substituting \eqref{E14} in \eqref{E12},
achieving \eqref{E35} is straightforward.

The Lagrangian variable $\lambda$ is set such that the power
constraint
$\text{E}[{\omega}^*(\sigma_s,\alpha){\gamma}^*(\sigma_s,\alpha)]=\bar{P}$
is satisfied. We have
\begin{equation}\label{E20}
\underset{s\in{\mathcal{S}}}{\sum}\underset{\alpha:{\gamma}^*(\sigma_s,\alpha)\leq
\frac{1}{\lambda}}{\int}{{\gamma}^*(\sigma_s,\alpha)
f(\alpha)\,P(s)\mathrm{d}\alpha}=\bar{P},
\end{equation}
where replacing $\frac{1}{\lambda}$ with $q^*_2$ completes the
proof.
\end{proof}

The next two Propositions quantify the performance of the proposed
SCOPA-MDO scheme in terms of the resulting distortion outage
probability and outage distortion exponent, respectively.
\begin{proposition}\label{P8}
The distortion outage probability for transmission of a
quasi-stationary source using the SCOPA-MDO scheme over a Rayleigh
block fading channel is given by\vspace{-2pt}
\begin{equation}\label{E46}
\text{P}_\text{Dout}=\text{Pr}(\sigma^2_s>D_m)\!-\!\underset{s:\sigma^2_s>D_m}\sum\!\exp{\left(-\frac{\left(\frac{\sigma^2_s}{D_m}\right)^{\frac{1}{b}}\!-\!1}{q^*_2}\right)}P(s)
\end{equation}
with $q^*_2$ satisfying the following equation\vspace{-4pt}
\begin{equation}\label{E102}
\begin{split}
\underset{s:\sigma^2_s>
D_m}{\sum}{\left(\left(\frac{\sigma^2_s}{D_m}\right)^{\frac{1}{b}}-1\right)E_1\left(\frac{\left(\frac{\sigma^2_s}{D_m}\right)^{\frac{1}{b}}-1}{q^*_2}\right)P(s)}=\bar{P}.
\end{split}
\end{equation}
\end{proposition}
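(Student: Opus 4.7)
The plan is to specialize the SCOPA-MDO solution in Proposition~\ref{P7} to the Rayleigh case and evaluate the distortion outage probability and the Lagrangian-based power constraint in closed form.

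First, I would precisely characterize the outage event using \eqref{E14}--\eqref{E35}. When $\sigma_s^2\le D_m$, the prescribed rate is $R=0$ and no power is allocated; since in this regime the source variance already meets the distortion target, no outage can occur. When $\sigma_s^2>D_m$ and $\alpha\ge [(\sigma_s^2/D_m)^{1/b}-1]/q_2^*$, substituting the optimized power $\gamma^*(\sigma_s,\alpha)$ into \eqref{E24} yields $D=D_m$, again no outage. In the remaining subcase $\sigma_s^2>D_m$ with $\alpha<[(\sigma_s^2/D_m)^{1/b}-1]/q_2^*$, the transmitter is silent and $D=\sigma_s^2>D_m$, which is precisely an outage. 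Using the independence of the source state and the channel gain, this gives
\begin{equation*}
\text{P}_\text{Dout}=\underset{s:\sigma_s^2>D_m}{\sum}P(s)\,\text{Pr}\!\left(\alpha<\frac{(\sigma_s^2/D_m)^{1/b}-1}{q_2^*}\right).
\end{equation*}

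Next, I would plug in the exponential CDF of the Rayleigh gain, $\text{Pr}(\alpha<x)=1-e^{-x}$ for $x\ge 0$. Splitting the resulting sum into its constant and exponential parts and noting that $\sum_{s:\sigma_s^2>D_m}P(s)=\text{Pr}(\sigma_s^2>D_m)$ immediately produces \eqref{E46}.

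To obtain \eqref{E102}, I would specialize the general power constraint \eqref{E19} to the Rayleigh pdf $f(\alpha)=e^{-\alpha}$. Setting $c_s:=(\sigma_s^2/D_m)^{1/b}-1$, the inner integration region reduces to the half-line $\alpha\ge c_s/q_2^*$, and the inner integral becomes $\int_{c_s/q_2^*}^{\infty}(c_s/\alpha)e^{-\alpha}\,\mathrm{d}\alpha=c_s\,E_1(c_s/q_2^*)$ by the definition \eqref{E95}. Weighting by $P(s)$ and summing over $\{s:\sigma_s^2>D_m\}$ yields \eqref{E102}. The argument is essentially direct computation with no deep analytical obstacle; the only place requiring care is the case analysis for the outage event, since one must separately handle the trivially non-outage regime $\sigma_s^2\le D_m$ and the implicit no-transmission mode in \eqref{E35}, which both yield $D=\sigma_s^2$ but only the latter can actually cause outage.
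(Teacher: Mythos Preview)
Your proposal is correct and follows essentially the same approach as the paper: it identifies the outage event via the case analysis of \eqref{E14}--\eqref{E35} (equivalently the paper's \eqref{E36}) as the joint event $\{\sigma_s^2>D_m,\ \alpha<[(\sigma_s^2/D_m)^{1/b}-1]/q_2^*\}$, then plugs in the exponential law for $\alpha$ to get \eqref{E46}, and specializes \eqref{E19} via the definition of $E_1$ to obtain \eqref{E102}. Your treatment is slightly more explicit about the $\sigma_s^2\le D_m$ branch and the independence of $s$ and $\alpha$, but the argument is otherwise identical to the paper's.
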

\begin{proof}
The proof is provided in Appendix \ref{A1}.
\end{proof}

\begin{proposition}\label{P15}
For communication of a quasi-stationary source over a Rayleigh block
fading channel, the SCOPA-MDO scheme achieves an outage distortion exponent
$\Delta_{\textit{OD}}$ of the order $O\left({\frac{\bar{P}}{\ln{\bar{P}}}}\right)$ for large average power limit $\bar{P}$.
\end{proposition}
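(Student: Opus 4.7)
The plan is to mirror the template used in the proof of Proposition \ref{P13}, starting from the closed-form expressions in Proposition \ref{P8} and letting $\bar{P}\to\infty$. First I would observe from \eqref{E102} that, as $\bar{P}$ grows without bound, the Lagrangian threshold $q_2^*$ must diverge as well: the left-hand side of \eqref{E102} is a finite sum whose terms all stay bounded whenever $q_2^*$ remains bounded, so large $\bar{P}$ forces $q_2^*\to\infty$. This in turn drives each argument $((\sigma_s^2/D_m)^{1/b}-1)/q_2^*$ of the $E_1(\cdot)$ functions to zero, which is precisely the regime where the approximations \eqref{E98} and \eqref{E99} become usable.

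Next I would denote $A_s:=(\sigma_s^2/D_m)^{1/b}-1$ for the states with $\sigma_s^2>D_m$, substitute \eqref{E98} into \eqref{E102}, and collect the $\ln q_2^*$ terms. The implicit constraint then becomes a linear equation in $\ln q_2^*$ of the form
\begin{equation*}
\ln(q_2^*)\sum_{s:\sigma_s^2>D_m} A_s\,P(s)\;-\;\sum_{s:\sigma_s^2>D_m} A_s\bigl(\ln A_s+E_c\bigr)P(s)\;\cong\;\bar{P},
\end{equation*}
which I can invert explicitly to give $q_2^*\cong K_1\exp(\bar{P}/K_2)$, with $K_2:=\sum_s A_sP(s)$ a source-dependent positive constant and $K_1$ collecting the remaining terms. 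Thus $q_2^*$ grows exponentially fast in $\bar{P}$.

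I would then apply \eqref{E99} term-by-term inside \eqref{E46}. For each admissible state, $\exp(-A_s/q_2^*)\cong 1-A_s/q_2^*$, so $\sum_{s:\sigma_s^2>D_m}\exp(-A_s/q_2^*)P(s)\cong\text{Pr}(\sigma_s^2>D_m)-(1/q_2^*)\sum_s A_s P(s)$. The constant piece cancels the first term in \eqref{E46}, leaving $\text{P}_\text{Dout}\cong K_2/q_2^*$. Combined with the exponential expression for $q_2^*$, this yields $-\ln\text{P}_\text{Dout}\cong \bar{P}/K_2 + O(1)$, and dividing by $\ln\bar{P}$ confirms $\Delta_{\textit{OD}}=O(\bar{P}/\ln\bar{P})$, completing the argument.

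The principal obstacle I anticipate is not in the asymptotic algebra but in carefully justifying that \eqref{E98} and \eqref{E99} can be invoked uniformly across the source states, since the constants $A_s$ for the smallest and largest qualifying $\sigma_s^2$ may differ considerably in magnitude. The saving observation is that $\mathcal{S}$ is finite and each $A_s$ is a fixed positive constant independent of $\bar{P}$, so once $q_2^*$ is sufficiently large the small-argument bounds hold uniformly in $s$ with $O(1/q_2^*)$ corrections that can be absorbed into the $O(1)$ slack without disturbing the leading $\bar{P}/\ln\bar{P}$ scaling.
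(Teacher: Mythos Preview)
Your proposal is correct and follows essentially the same route as the paper: argue from \eqref{E102} that $q_2^*\to\infty$, use \eqref{E98} to invert \eqref{E102} and obtain $q_2^*\cong K_1\exp(\bar P/K_2)$ with $K_2=\sum_{s:\sigma_s^2>D_m}A_sP(s)$, then apply \eqref{E99} in \eqref{E46} to get $\text{P}_\text{Dout}\cong K_2/q_2^*$ and read off the $\bar P/\ln\bar P$ scaling. Your added remark on uniformity over the finite state set $\mathcal{S}$ is a welcome clarification that the paper leaves implicit.
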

\begin{proof}
Noting \eqref{E102}, $\bar{P}\rightarrow \infty$ requires that
$q^*_2\rightarrow \infty$. In this case, using \eqref{E98}, we can
rewrite \eqref{E102} as follows.
\begin{equation}\label{E134}
\begin{split}
\underset{s:\sigma^2_s>
D_m}{\sum\!}{\!-\left(\left(\frac{\sigma^2_s}{D_m}\right)^{\frac{1}{b}}-1\right)\ln\frac{\left(\frac{\sigma^2_s}{D_m}\right)^{\frac{1}{b}}-1}{q^*_2}P(s)}\cong\bar{P}
\end{split}
\end{equation}
and after some manipulations, we obtain
\begin{equation}\label{E103}
\begin{split}
q^*_2\cong\exp\left(\frac{\bar{P}}{\underset{s:\sigma^2_s>
D_m}{\sum}{\left(\left(\frac{\sigma^2_s}{D_m}\right)^{\frac{1}{b}}-1\right)P(s)}}\right).\exp\!\left(\!\frac{\!\underset{s:\sigma^2_s>
D_m}{\sum}{\!\left(\left(\frac{\sigma^2_s}{D_m}\right)^{\frac{1}{b}}-1\right)\ln\left(\!\left(\frac{\sigma^2_s}{D_m}\!\right)^{\frac{1}{b}}-1\right)\!P(s)}\!}{\underset{s:\sigma^2_s>
D_m}{\sum}{\!\left(\left(\frac{\sigma^2_s}{D_m}\right)^{\frac{1}{b}}-1\right)\!P(s)}}\!\right).
\end{split}
\end{equation}
Considering \eqref{E46} and \eqref{E99} for $q^*_2\rightarrow
\infty$, we have
\begin{equation}\label{E104}
\text{P}_\text{Dout}\cong\frac{1}{q^*_2}\underset{s:\sigma^2_s>D_m}\sum{\left(\left(\frac{\sigma^2_s}{D_m}\right)^{\frac{1}{b}}-1\right)P(s)}.
\end{equation}
Hence, noting \eqref{E103} for $q^*_2\rightarrow \infty$ we obtain
\begin{equation}\label{E65}
\begin{split}
\Delta_{\textit{OD}}=\underset{\bar{P}\rightarrow\infty}{\lim}{\frac{\bar{P}}{\ln{\bar{P}}\underset{s:\sigma^2_s>
D_m}{\sum}{\left(\left(\frac{\sigma^2_s}{D_m}\right)^{\frac{1}{b}}-1\right)P(s)}}}.
\end{split}
\end{equation}
Thus, the proof is completed.
\end{proof}

The following Corollary expresses the implication of the SCOPA-MDO
design for transmission of a stationary source over block fading
channels. This is directly obtained from Proposition \ref{P7} when a
stationary source is assumed.
\begin{corollary}\label{P11}
The optimum channel coding rate and power adaptation prescribed by
SCOPA-MDO for transmission of a stationary source over a block
fading channel are given by
\begin{align}
R^*=\begin{cases}\frac{1}{2b}\log_2{\frac{\sigma^2}{D_m}}&\text{if }\alpha\geq\frac{\left(\frac{\sigma^2}{D_m}\right)^\frac{1}{b}-1}{q^*_2}\\
0&\text{otherwise}
\end{cases}
\end{align}
\begin{align}\label{E15}
\gamma^*=\begin{cases}\frac{\left(\frac{\sigma^2}{D_m}\right)^\frac{1}{b}-1}{q^*_2}&\text{if
}\alpha\geq\frac{\left(\frac{\sigma^2}{D_m}\right)^\frac{1}{b}-1}{q^*_2},
\\0&\text{otherwise}\end{cases}
\end{align}
in which $q^*_2$ is selected such that the power constraint is
satisfied with equality, i.e.,
\begin{equation}\label{E17}
\underset{\alpha\geq\frac{\left(\frac{\sigma^2}{D_m}\right)^\frac{1}{b}-1}{q^*_2}}{\int}{\frac{\left(\frac{\sigma^2}{D_m}\right)^\frac{1}{b}-1}{\alpha}
f(\alpha)\,\mathrm{d}\alpha}=\bar{P}.
\end{equation}
Specifically for Rayleigh block fading channel, \eqref{E17} is
simplified to
\begin{equation}\label{E18}
\left(\left(\frac{\sigma^2}{D_m}\right)^\frac{1}{b}-1\right)E_1\left(\frac{\left(\frac{\sigma^2}{D_m}\right)^\frac{1}{b}-1}{q^*_2}\right)=\bar{P}.
\end{equation}
\end{corollary}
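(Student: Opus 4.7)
The plan is to observe that the corollary is an immediate specialization of Proposition \ref{P7} to a source with a single state. First I would reduce the state space: a stationary Gaussian source corresponds to $N_s=1$ with $P(s)=1$ and $\sigma^2_s = \sigma^2$, so every sum over $s\in\mathcal{S}$ in Proposition \ref{P7} collapses to a single term. I would also dispense with the degenerate case $\sigma^2 \leq D_m$ up front, since then $D = \sigma^2 \leq D_m$ with probability one and no outage can occur; so we may assume $\sigma^2 > D_m$, which makes $(\sigma^2/D_m)^{1/b}-1$ strictly positive and allows us to drop the $[\,\cdot\,]^+$ in \eqref{E14}.

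Next I would substitute directly into \eqref{E14}, \eqref{E35}, and \eqref{E19}. The activation condition $\frac{1}{\alpha}[(\sigma^2/D_m)^{1/b}-1] \leq q^*_2$ rearranges to $\alpha \geq \frac{(\sigma^2/D_m)^{1/b}-1}{q^*_2}$, yielding the threshold that appears in the corollary. On that active set, plugging the optimal power into \eqref{E35} (equivalently \eqref{E12}) gives
\begin{equation*}
R^*=\tfrac{1}{2}\log_2\bigl(1+\alpha\gamma^*(\sigma,\alpha)\bigr)=\tfrac{1}{2}\log_2\bigl((\sigma^2/D_m)^{1/b}\bigr)=\tfrac{1}{2b}\log_2\tfrac{\sigma^2}{D_m},
\end{equation*}
which is independent of $\alpha$ and matches the stated rate expression; off the active set the rate is zero. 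The power constraint \eqref{E19} similarly degenerates to the single-source-state integral \eqref{E17}.

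Finally, for the Rayleigh specialization I would evaluate the integral using $f(\alpha)=e^{-\alpha}$ on $[0,\infty)$. Factoring the constant $(\sigma^2/D_m)^{1/b}-1$ out of the integrand leaves $\int_{t}^{\infty} \alpha^{-1} e^{-\alpha}\,d\alpha$ with $t=\frac{(\sigma^2/D_m)^{1/b}-1}{q^*_2}$, which is precisely $E_1(t)$ by definition \eqref{E95}, giving \eqref{E18}.

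There is no real obstacle here: the corollary is a direct consequence of Proposition \ref{P7} once the source collapses to a single state, and the Rayleigh reduction is a textbook application of the $E_1$ definition. The only mild care needed is to justify excluding $\sigma^2 \leq D_m$ so that the $[\,\cdot\,]^+$ disappears and the optimal rate and power are truly given by the closed-form expressions rather than the trivial zero solution.
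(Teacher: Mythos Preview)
Your approach is correct and is exactly what the paper does: it states only that the corollary ``is directly obtained from Proposition~\ref{P7} when a stationary source is assumed,'' and your proposal simply spells out that specialization together with the Rayleigh evaluation via \eqref{E95}. One small remark: your substitution into \eqref{E14} would actually yield $\gamma^*=\frac{1}{\alpha}\bigl((\sigma^2/D_m)^{1/b}-1\bigr)$ on the active set (and indeed your computation of $R^*$ tacitly uses this), whereas the corollary as stated has $q^*_2$ in the denominator of $\gamma^*$; this appears to be a typo in the paper's statement rather than a flaw in your argument.
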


\begin{remark}\label{rem1}
Examining Corollaries \ref{P10} and \ref{P11} for transmission of a
stationary source over a block fading channel, one sees that the
SCOPA-MDO and COPA-MDO schemes provide the same distortion outage
probability and outage distortion exponent. This implies that in
this setting with power adaptation an optimized fixed rate provides
all the gain that may be obtained by rate adaptation.
\end{remark}

The performance of SCOPA-MDO scheme is studied in Section \ref{SV}.


\section{Performance Evaluation}\label{SV}
In this section, we first present two constant power transmission
schemes as benchmarks for comparisons. Next, we consider analytical
performance comparison of different schemes followed by numerical
results. To this end, we consider four quasi-stationary sources,
with $N_s=25$ where the variance of the source in the state $s$ is
given by
$\sigma^2_s(s)=(1+\frac{1}{6}s)^2:\forall{s}\in\{0,1,...,N_S-1\}$.
For three of the sources, labeled as G1, G2 and G3, the probability
of being in different states follows a discrete Gaussian
distribution with mean 3 and variances 0.05, 0.48, 1.07,
respectively. For the fourth source, U, the said distribution is
considered uniform with the same mean and a variance of 1.44. We
also consider an stationary source S with $\sigma_s=3$ for a
meaningful comparison. Unless otherwise mentioned, we consider the
source G2 for the following results and simulations.

\subsection{Benchmark}
Two constant power schemes for transmission of a quasi-stationary
source over a block fading channel are considered as benchmarks for
comparisons. In the first scheme, the channel coding rate is
adjusted based on the channel state to minimize the distortion outage probability; hence
the scheme is labeled as Channel Optimized Rate Adaptation with
Constant Power (CORACP). In the second scheme with Constant Rate and
Constant Power (CRCP), the aim is to find the optimized fixed rate such that the distortion outage probability is minimized.

\subsubsection{Channel Optimized Rate Adaptation with Constant Power}
With CORACP and constant transmission power $\bar{P}$, the
instantaneous capacity is given by
$C=\frac{1}{2}\log_2{\left(1+\alpha\bar{P}\right)}$; and hence to
minimize $\text{P}_{\text{Dout}}$ it is logical to consider the rate
adaptation strategy of $R=C$. The source coding rate is then set as
$R_s=bR$. The next two Propositions quantify the distortion outage
performance of CORACP.

\begin{proposition}\label{P1}
The distortion outage probability for transmission of a
quasi-stationary source over a Rayleigh block fading channel using
CORACP is given by
\begin{equation}\label{E5}
\begin{split}
\text{P}_\text{Dout}=\text{Pr}(\sigma^2_s>D_m)-\underset{s:\sigma^2_s>D_m}\sum\exp\left(-\frac{1}{\bar{P}}\left[\left(\frac{\sigma^2_s}{D_m}\right)^{\frac{1}{b}}-1\right]\right)P(s){.}
\end{split}
\end{equation}
\end{proposition}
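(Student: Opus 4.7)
The plan is a direct computation based on substituting the CORACP rate adaptation rule into the distortion expression \eqref{E10} and then averaging over the joint distribution of $(s,\alpha)$. Since $R$ is set to equal the instantaneous capacity $C(\alpha,\bar P)$, a channel outage never occurs, so the distortion in state $(s,\alpha)$ is simply
\begin{equation*}
D(\sigma_s,\alpha,\bar P)=\sigma_s^2\,2^{-2bR}=\sigma_s^2\,(1+\alpha\bar P)^{-b}.
\end{equation*}

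First I would translate the event $\{D>D_m\}$ into an event on $\alpha$ for each source state. A quick rearrangement shows this is equivalent to
\begin{equation*}
\alpha<\frac{1}{\bar P}\left[\left(\frac{\sigma_s^2}{D_m}\right)^{1/b}-1\right],
\end{equation*}
which is only a nontrivial event when $\sigma_s^2>D_m$ (otherwise the right-hand side is nonpositive and the outage probability contribution is zero). Second, I would condition on the source state and write
\begin{equation*}
\mathrm{P}_{\mathrm{Dout}}=\sum_{s:\sigma_s^2>D_m}P(s)\,\Pr\!\left(\alpha<\tfrac{1}{\bar P}\bigl[(\sigma_s^2/D_m)^{1/b}-1\bigr]\right).
\end{equation*}

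Third, I would evaluate the inner probability using the Rayleigh assumption that $|\alpha|^2$ is exponential with mean one (so that the CDF is $1-e^{-x}$ for $x\ge0$), obtaining
\begin{equation*}
\Pr\!\left(\alpha<\tfrac{1}{\bar P}\bigl[(\sigma_s^2/D_m)^{1/b}-1\bigr]\right)=1-\exp\!\left(-\tfrac{1}{\bar P}\bigl[(\sigma_s^2/D_m)^{1/b}-1\bigr]\right).
\end{equation*}
Substituting back and splitting the $1$ off the sum to recombine with $\Pr(\sigma_s^2>D_m)$ yields \eqref{E5}.

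There is no real obstacle here; the only step that requires attention is correctly restricting the sum to states with $\sigma_s^2>D_m$ so that the truncated argument of the exponential stays nonnegative and the decomposition into the $\Pr(\sigma_s^2>D_m)$ term minus a sum of exponentials is legitimate. Everything else is routine algebra and the Rayleigh CDF.
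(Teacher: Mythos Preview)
Your proposal is correct and follows essentially the same route as the paper's own proof: substitute $R=C(\alpha,\bar P)$ to get $D=\sigma_s^2/(1+\alpha\bar P)^b$, rewrite $\{D>D_m\}$ as the joint event $\{\alpha<\tfrac{1}{\bar P}[(\sigma_s^2/D_m)^{1/b}-1],\ \sigma_s^2>D_m\}$, and evaluate via the exponential CDF of the Rayleigh channel gain. Your explicit handling of the restriction to states with $\sigma_s^2>D_m$ and the final split into $\Pr(\sigma_s^2>D_m)$ minus the sum of exponentials is in fact slightly more detailed than the paper's appendix, which leaves that last step as ``straightforward.''
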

\begin{proof}
The proof is provided in Appendix \ref{A4}.
\end{proof}

\begin{proposition}\label{P12}
For communication of a quasi-stationary source over a Rayleigh block
fading channel, the CORACP scheme achieves the outage distortion exponent $\Delta_{\textit{OD}}$ of the order $O\left(1\right)$.
\end{proposition}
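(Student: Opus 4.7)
The plan is to start from the closed-form distortion outage probability in Proposition \ref{P1}, perform an asymptotic expansion for large $\bar{P}$, and then plug into the definition \eqref{E83} of $\Delta_{\textit{OD}}$.

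First, I would observe that as $\bar{P}\to\infty$, the argument of each exponential in \eqref{E5}, namely $\frac{1}{\bar{P}}\bigl[(\sigma^2_s/D_m)^{1/b}-1\bigr]$, tends to zero for every state $s$ with $\sigma^2_s>D_m$. Since the sum in \eqref{E5} ranges over the finite state set $\mathcal{S}$, the convergence is uniform in $s$, and the approximation \eqref{E99} can be applied termwise. This gives
\begin{equation*}
\exp\!\left(-\frac{1}{\bar{P}}\Bigl[\bigl(\tfrac{\sigma^2_s}{D_m}\bigr)^{1/b}-1\Bigr]\right)\cong 1-\frac{1}{\bar{P}}\Bigl[\bigl(\tfrac{\sigma^2_s}{D_m}\bigr)^{1/b}-1\Bigr].
\end{equation*}

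Next, I would substitute this back into \eqref{E5}. The constant term cancels $\mathrm{Pr}(\sigma^2_s>D_m)$ exactly, leaving
\begin{equation*}
\text{P}_\text{Dout}\cong\frac{1}{\bar{P}}\underset{s:\sigma^2_s>D_m}{\sum}\Bigl[\bigl(\tfrac{\sigma^2_s}{D_m}\bigr)^{1/b}-1\Bigr]P(s),
\end{equation*}
i.e., $\text{P}_\text{Dout}\sim K/\bar{P}$ for a strictly positive, $\bar{P}$-independent constant $K$ determined by the source statistics and by $D_m$ and $b$.

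Finally, inserting this expansion into \eqref{E83} yields
\begin{equation*}
\Delta_{\textit{OD}}=\underset{\bar{P}\rightarrow\infty}{\lim}\frac{\ln\bar{P}-\ln K}{\ln\bar{P}}=1,
\end{equation*}
which is of order $O(1)$, completing the argument. The only delicate point worth a sentence of justification is checking that the linearization step is valid: because the state set is finite and $(\sigma_s^2/D_m)^{1/b}-1$ is bounded, the second-order remainder in \eqref{E99} is uniformly $o(1/\bar{P})$ and hence negligible relative to the leading $1/\bar{P}$ term that survives after the cancellation; everything else is routine substitution. The contrast with Propositions \ref{P13} and \ref{P15} is then immediate: without power adaptation the outage probability decays only like $1/\bar{P}$ instead of exponentially, so the exponent stays bounded rather than scaling like $\bar{P}/\ln\bar{P}$.
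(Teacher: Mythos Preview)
Your proof is correct and follows essentially the same approach as the paper: linearize the exponentials in \eqref{E5} via \eqref{E99} to obtain $\text{P}_\text{Dout}\cong K/\bar{P}$, then substitute into \eqref{E83} to get $\Delta_{\textit{OD}}=1$. Your version is actually slightly more careful, since you explicitly justify the termwise linearization by finiteness of the state set and uniform control of the remainder, which the paper leaves implicit.
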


\begin{proof}
Using \eqref{E99} and considering large enough $\bar{P}$, \eqref{E5}
can be rewritten as
\begin{equation}\label{E11}
\text{P}_\text{Dout}\cong\frac{1}{\bar{P}}\underset{s:\sigma^2_s>D_m}\sum\left(\left(\frac{\sigma^2_s}{D_m}\right)^{\frac{1}{b}}-1\right)P(s).
\end{equation}
Thus,
\begin{equation}\label{E29}
\begin{split}
\Delta_{\textit{OD}}=\underset{\bar{P}\rightarrow\infty}{\lim}{-\frac{-\ln{\bar{P}}+\ln{\underset{s:\sigma^2_s>D_m}\sum\left(\left(\frac{\sigma^2_s}{D_m}\right)^{\frac{1}{b}}-1\right)P(s)}}{\ln{\bar{P}}}}=1
\end{split}
\end{equation}
and the proof is complete.
\end{proof}
Based on Propositions \ref{P1} and \ref{P12}, the following
Corollary presents the performance of CORACP with stationary
sources.
\begin{corollary}\label{P23}
For communication of a stationary source over a Rayleigh block
fading channel, the CORACP scheme achieves a distortion outage
probability of
\begin{align}\label{E58}
\text{P}_\text{Dout}=1-\exp\left(-\frac{\left(\frac{\sigma^2}{D_m}\right)^\frac{1}{b}-1}{\bar{P}}\right)
\end{align}
and an outage distortion exponent of
$\Delta_{\textit{OD}}$ of the order $O\left(1\right)$.
\end{corollary}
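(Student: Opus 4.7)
The plan is to treat the stationary source as the degenerate quasi-stationary source with a single state $s_0$ of variance $\sigma^2\ge D_m$ and probability mass $P(s_0)=1$, and then specialize Propositions \ref{P1} and \ref{P12} to this case. Nothing new has to be proved: both claims should drop out by inspection once the sums collapse.

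For the distortion outage probability, I would start from \eqref{E5}. Because the source is stationary with $\sigma^2\ge D_m$ we have $\text{Pr}(\sigma_s^2>D_m)=1$ and the index set $\{s:\sigma_s^2>D_m\}$ consists of the single state $s_0$, which carries all the probability. Substituting these into \eqref{E5} turns the expression directly into
\begin{equation*}
\text{P}_\text{Dout}=1-\exp\!\left(-\frac{1}{\bar P}\left[\left(\tfrac{\sigma^2}{D_m}\right)^{1/b}-1\right]\right),
\end{equation*}
which is \eqref{E58}. A useful sanity check is that this matches Corollary \ref{P19} with $q^*$ replaced by $\bar P$, which is exactly what should happen when constant power replaces water-filling-type adaptation.

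For the outage distortion exponent, I would invoke the high-SNR approximation \eqref{E11} from the proof of Proposition \ref{P12}, which under the same specialization (single state, probability one) becomes
\begin{equation*}
\text{P}_\text{Dout}\cong\frac{1}{\bar P}\left(\left(\tfrac{\sigma^2}{D_m}\right)^{1/b}-1\right).
\end{equation*}
Plugging this into the definition \eqref{E83} gives $\Delta_{\textit{OD}}=\lim_{\bar P\to\infty}\frac{\ln\bar P-\ln((\sigma^2/D_m)^{1/b}-1)}{\ln\bar P}=1$, which is $O(1)$ as claimed. Alternatively, one can observe that the computation carried out for the general quasi-stationary case in \eqref{E29} is index-free in its dependence on the source, so restricting the sum to one term does not change the rate in $\bar P$.

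I do not foresee a real obstacle, since this is a textbook specialization; the only thing to be careful about is ensuring consistency between the two parts (the closed-form in \eqref{E58} must give exponent $1$ through the same $-\ln(1-e^{-x})\approx x$ expansion used in Proposition \ref{P12}), and that the excluded case $\sigma^2<D_m$, in which $\text{P}_\text{Dout}=0$ trivially, is correctly handled by the hypothesis $\sigma^2\ge D_m$ implicitly carried over from Corollary \ref{P19}.
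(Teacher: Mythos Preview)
Your proposal is correct and follows exactly the approach the paper indicates: the paper simply states that Corollary \ref{P23} follows from Propositions \ref{P1} and \ref{P12} applied to a stationary source, and you carry out precisely that specialization by collapsing the sums in \eqref{E5} and \eqref{E11} to a single term. Your additional remarks (the sanity check against Corollary \ref{P19} and the handling of $\sigma^2<D_m$) are sound but go beyond what the paper itself provides.
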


\subsubsection{Constant Rate Constant Power}
With CRCP, the fixed rate is optimized to minimize the distortion outage
probability when the power is constant. The following three
Propositions express the optimized fixed rate, distortion outage
probability and distortion outage exponent obtained by CRCP.

\begin{proposition}\label{P25}
For transmission of a quasi-stationary source over a block fading
channel using the CRCP scheme, the optimum rate, $R^*$, is an
element of the set
$\{\frac{1}{2b}\log_2\frac{\sigma^2_s}{D_m};s\in\mathcal{S},\;\sigma^2_s\geq
D_m\}$ that maximizes
\begin{equation}
{\text{Pr}\left(\alpha\geq\frac{2^{2R}-1}{\bar{P}}\right)\biggl(\text{Pr}\left(\!\sigma^2_s>D_m\right)-\text{Pr}\left(\sigma^2_s2^{-2bR}>D_m\right)\!\biggr)}\!
\end{equation}
\end{proposition}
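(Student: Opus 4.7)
The plan is to mimic the argument used to prove Proposition \ref{P4}, but with the constant transmission power $\bar{P}$ playing the role of the threshold $q_1^*$. Since CRCP fixes $\gamma=\bar{P}$, the instantaneous capacity reduces to $C(\alpha,\bar{P})=\frac{1}{2}\log_2(1+\alpha\bar{P})$, and there is no power optimization to perform: only the rate $R$ is free. Starting from \eqref{E75} with $\gamma=\bar{P}$, and using $\text{Pr}(\sigma^2_s2^{-2bR}>D_m)\leq\text{Pr}(\sigma^2_s>D_m)$, I would rewrite
\begin{equation*}
\text{P}_\text{Dout}=\text{Pr}\bigl(\sigma^2_s>D_m\bigr)-\text{Pr}\bigl(R\leq C(\alpha,\bar{P})\bigr)\Bigl(\text{Pr}\bigl(\sigma^2_s>D_m\bigr)-\text{Pr}\bigl(\sigma^2_s2^{-2bR}>D_m\bigr)\Bigr),
\end{equation*}
so that minimizing $\text{P}_\text{Dout}$ over $R$ is equivalent to maximizing the product appearing in the statement.

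Next I would invert $R\leq C(\alpha,\bar{P})$ to obtain $\text{Pr}(R\leq C(\alpha,\bar{P}))=\text{Pr}(\alpha\geq(2^{2R}-1)/\bar{P})$, which is the first factor in the claimed expression. This factor is a strictly decreasing continuous function of $R$ because $(2^{2R}-1)/\bar{P}$ is strictly increasing in $R$ and $f(\alpha)$ is a continuous density. The second factor, $\text{Pr}(\sigma^2_s>D_m)-\text{Pr}(\sigma^2_s2^{-2bR}>D_m)$, is a nondecreasing staircase function of $R$: it is piecewise constant and jumps upward exactly at those $R$ for which an additional source state $s$ satisfies $\sigma^2_s2^{-2bR}\leq D_m$, i.e., at $R=\frac{1}{2b}\log_2(\sigma^2_s/D_m)$ for each $s\in\mathcal{S}$ with $\sigma^2_s\geq D_m$.

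Combining these monotonicity properties, the product is strictly decreasing between consecutive discontinuity points (since the staircase factor is constant there while the first factor strictly decreases), and it can only increase at a discontinuity point, where the staircase jumps up. Therefore the supremum over any interval between two consecutive jumps is attained at the left endpoint, i.e., at the jump itself. Hence the maximum over all $R>0$ is attained at one of the points of the set $\{\frac{1}{2b}\log_2(\sigma^2_s/D_m):s\in\mathcal{S},\sigma^2_s\geq D_m\}$, which is exactly the claim.

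The argument is essentially structural and not computationally heavy; the only subtle point is to be careful about one-sided limits at the jumps, since the staircase factor needs to be evaluated at the value of $R$ that already includes the new state $s$ (i.e., at the jump, one takes the right-continuous value). This is handled by defining the staircase via the non-strict inequality $\sigma^2_s2^{-2bR}\leq D_m$, consistent with the definition of distortion outage in \eqref{E43}. With that convention the maximizer lies in the stated finite set and the proof is complete.
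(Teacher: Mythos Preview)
Your proposal is correct and follows essentially the same approach as the paper: the paper's proof simply states that it is similar to the proof of Proposition~\ref{P4}, and your argument is precisely that proof with the constant power $\bar{P}$ substituted for the threshold $q_1^*$. Your added remark on one-sided limits at the jump points is a welcome clarification but does not depart from the paper's line of reasoning.
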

\begin{proof}
The proof is similar to the proof of Proposition \ref{PR6}.
\end{proof}
\begin{proposition}\label{P26}
For transmission of a quasi-stationary source over a Rayleigh block
fading channel, the CRCP scheme achieves the distortion outage
probability of
\begin{align}
\text{P}_\text{Dout}=\left(1-\exp\left(-\frac{2^{2R^*}-1}{\bar{P}}\right)\right)\text{Pr}(\sigma^2_s>D_m)
+\exp\left(-\frac{2^{2R^*}-1}{\bar{P}}\right)\text{Pr}(\sigma^2_s2^{-2bR^*}>D_m),
\end{align}
with $R^*$ in Proposition \ref{P25}, and the outage distortion exponent $\Delta_{\textit{OD}}$ of the order $O\left(1\right)$.
\end{proposition}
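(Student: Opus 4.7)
The plan is to mirror the arguments used in Propositions \ref{P9} and \ref{P13}, since the CRCP scheme is structurally the CRCP-analog of the COPA-MDO scheme: the single fixed rate $R^{*}$ plays the same role, but the power is the constant $\bar{P}$ instead of the channel-adapted $\gamma^{*}(\alpha,R)$. Thus all the source-side probabilities $\text{Pr}(\sigma_s^2>D_m)$ and $\text{Pr}(\sigma_s^2 2^{-2bR^{*}}>D_m)$ carry over unchanged from \eqref{E75}, and only the channel-outage term needs to be recomputed.

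First I would start from the general expression \eqref{E75}, substitute $\gamma=\bar{P}$ (a constant), and use $C(\alpha,\bar{P})=\tfrac12\log_2(1+\alpha\bar{P})$ to rewrite the channel-outage event as $\{\alpha<(2^{2R}-1)/\bar{P}\}$. For Rayleigh fading with $E[|\alpha|^2]=1$ this probability is
\begin{equation*}
\text{Pr}(R>C(\alpha,\bar{P}))=1-\exp\!\left(-\frac{2^{2R}-1}{\bar{P}}\right).
\end{equation*}
Plugging this into \eqref{E75} with $R=R^{*}$ from Proposition \ref{P25} immediately gives the claimed closed-form for $\text{P}_\text{Dout}$.

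For the exponent part I would replay the asymptotic argument in the proof of Proposition \ref{P13}, but with $\bar{P}$ in place of $2^{2R^{*}}-1/q_1^{*}$ in the Rayleigh exponent. For large $\bar{P}$ the ratio $(2^{2R^{*}}-1)/\bar{P}\to 0$, so invoking \eqref{E99} yields
\begin{equation*}
\text{P}_\text{Dout}\cong\frac{2^{2R^{*}}-1}{\bar{P}}\,\text{Pr}(\sigma_s^2>D_m)+\left(1-\frac{2^{2R^{*}}-1}{\bar{P}}\right)\text{Pr}(\sigma_s^2 2^{-2bR^{*}}>D_m).
\end{equation*}
As in Proposition \ref{P13}, minimizing this over the discrete set in Proposition \ref{P25} forces $R^{*}=\tfrac{1}{2b}\log_2(\max_s\{\sigma_s^2\}/D_m)$ for all sufficiently large $\bar{P}$, since any smaller choice leaves a strictly positive, $\bar{P}$-independent contribution $\text{Pr}(\sigma_s^2 2^{-2bR^{*}}>D_m)$ that dominates the vanishing $1/\bar{P}$ term. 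Hence $\text{P}_\text{Dout}\sim c/\bar{P}$ for a positive constant $c$ depending only on the source statistics and $D_m$, and definition \eqref{E83} gives $\Delta_{\textit{OD}}=1$, which is $O(1)$.

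The main obstacle I anticipate is exactly the step of justifying that the maximizer $R^{*}$ in Proposition \ref{P25} asymptotically locks onto the largest discontinuity point $\tfrac{1}{2b}\log_2(\max_s\{\sigma_s^2\}/D_m)$: one must show that the $\bar{P}$-dependent channel-outage penalty $(2^{2R^{*}}-1)/\bar{P}$ eventually becomes negligible compared with the source-side gap $\text{Pr}(\sigma_s^2 2^{-2bR^{*}}>D_m)$ incurred by any smaller candidate in the discrete set. This monotonicity/dominance comparison is the only non-routine piece; once it is in place, the remainder is the same Euler/Taylor calculation already used for the COPA-MDO case, and the $O(1)$ scaling follows by the same $\ln\bar{P}/\ln\bar{P}=1$ limit that appeared in Proposition \ref{P12}.
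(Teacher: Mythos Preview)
Your proposal is correct and follows essentially the same approach as the paper: start from \eqref{E75} with $\gamma=\bar{P}$, evaluate the Rayleigh channel-outage probability, then for the exponent use \eqref{E99}, argue that $R^{*}$ asymptotically equals $\tfrac{1}{2b}\log_2(\max_s\{\sigma_s^2\}/D_m)$, and read off $\Delta_{\textit{OD}}=1$. The paper's proof is terser on the justification that $R^{*}$ locks onto the largest discontinuity point (it simply says ``one can verify''), whereas you spell out the dominance comparison explicitly; otherwise the arguments coincide.
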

\begin{proof}
Noting \eqref{E75} and constant power $\bar{P}$ with exponentially
distributed channel gain, obtaining $\text{P}_\text{Dout}$ is
straightforward. For a large power constraint, one can verify that
$R^*$ is simplified to
\begin{equation}\label{E42}
R^*=\frac{1}{2b}\log_2\frac{\underset{s}{\operatorname{max}}\{\sigma^2_s\}}{D_m}.
\end{equation}
Thus, from Proposition \ref{P26} and noting \eqref{E99} and \eqref{E42}, we obtain
\begin{equation}\label{E44}
\begin{split}
\text{P}_\text{Dout}\cong
\frac{\left(\frac{\underset{s}{\operatorname{max}}\{\sigma^2_s\}}{D_m}\right)^{\frac{1}{b}}-1}{\bar{P}}\text{Pr}\left(\sigma^2_s>D_m\right)
\end{split}
\end{equation}
and therefore,
\begin{equation}\label{E26}
\begin{split}
\Delta_{\textit{OD}}=\underset{\bar{P}\rightarrow\infty}{\lim}{-\frac{-\ln(\bar{P})+\ln\left(\left({\left(\frac{\underset{s}{\operatorname{max}}\{\sigma^2_s\}}{D_m}\right)^{\frac{1}{b}}-1}\right)\text{Pr}\left(\sigma^2_s>D_m\right)\right)}{\ln{\bar{P}}}}=1.
\end{split}
\end{equation}
\end{proof}

Based on the above two Propositions, the following two Corollaries present the optimum rate and the performance of CRCP with stationary sources.
\begin{corollary}\label{P28}
The optimum channel coding rate prescribed by CRCP for transmission
of a stationary source over a block fading channel is given by
\begin{align}\label{E8}
R^*=\frac{1}{2b}\log_2{\frac{\sigma^2}{D_m}}
\end{align}
\end{corollary}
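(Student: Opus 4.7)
The plan is to derive the Corollary as a direct specialization of Proposition \ref{P25} (CRCP optimum rate for a general quasi-stationary source) to the case where the source is stationary with variance $\sigma^2 \geq D_m$. Proposition \ref{P25} states that $R^*$ is the element of the candidate set $\{\frac{1}{2b}\log_2\frac{\sigma^2_s}{D_m}:s\in\mathcal{S},\;\sigma^2_s\geq D_m\}$ that maximizes a particular product, so the entire argument reduces to identifying what this candidate set looks like in the stationary limit.

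First I would observe that a stationary Gaussian source corresponds to the degenerate case of the quasi-stationary model in which $\mathcal{S}$ effectively contains a single state with variance $\sigma^2$ and $P(s)=1$ at that state. Second, under the standing assumption $\sigma^2 \geq D_m$ (if $\sigma^2 < D_m$ then $\text{P}_\text{Dout}=0$ trivially and there is nothing to optimize), the candidate set in Proposition \ref{P25} collapses to the singleton $\{\frac{1}{2b}\log_2\frac{\sigma^2}{D_m}\}$. Third, since the maximization in Proposition \ref{P25} is over a set containing one point, that point is the maximizer by default, yielding $R^*=\frac{1}{2b}\log_2\frac{\sigma^2}{D_m}$ as claimed.

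As a sanity check on why this is the right answer even from first principles: at this rate one has exactly $\sigma^2 2^{-2bR^*}=D_m$, so in the absence of a channel outage the distortion just meets the threshold. Any strictly smaller $R$ gives $\sigma^2 2^{-2bR}>D_m$, which forces $\text{Pr}(\sigma^2 2^{-2bR}>D_m)=1$ and makes the objective in Proposition \ref{P25} vanish; any strictly larger $R$ leaves the factor $\text{Pr}(\sigma^2>D_m)-\text{Pr}(\sigma^2 2^{-2bR}>D_m)$ unchanged while strictly shrinking $\text{Pr}(\alpha\geq (2^{2R}-1)/\bar{P})$. Hence $R^*$ must sit at the unique breakpoint, in agreement with the singleton argument.

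There is no real obstacle here; the only step worth being careful about is handling the degenerate case $\sigma^2<D_m$ so that the hypothesis $\sigma^2_s \geq D_m$ in Proposition \ref{P25} is not vacuous. That is dispatched in one line by noting the outage probability is identically zero in that regime, so the Corollary is vacuous and no rate optimization is needed.
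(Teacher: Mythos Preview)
Your proposal is correct and matches the paper's approach: the paper simply states that Corollary \ref{P28} follows from Proposition \ref{P25} (and \ref{P26}) when specialized to a stationary source, which is exactly your singleton-candidate-set argument. Your additional first-principles sanity check is sound and more detailed than anything the paper provides.
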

\begin{corollary}\label{P29}
For communication of a stationary source over a Rayleigh block fading channel, the CRCP scheme achieves a distortion outage probability of
\begin{align}\label{E58}
\text{P}_\text{Dout}=1-\exp\left(-\frac{\left(\frac{\sigma^2}{D_m}\right)^\frac{1}{b}-1}{\bar{P}}\right)
\end{align}
and an outage distortion exponent of
$\Delta_{\textit{OD}}$ of the order $O\left(1\right)$.
\end{corollary}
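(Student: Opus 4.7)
The plan is to derive Corollary \ref{P29} as a direct specialization of Proposition \ref{P26} (together with Corollary \ref{P28}) to the stationary case, where the source variance $\sigma^2 \geq D_m$ is deterministic. The only real work is to simplify the two probability terms appearing in Proposition \ref{P26} and then re-apply the same high-SNR expansion used in the proof of Proposition \ref{P26} to identify the exponent.

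First I would invoke Corollary \ref{P28} to set $R^* = \frac{1}{2b}\log_2\frac{\sigma^2}{D_m}$. For a stationary source, the source state is degenerate, so $\text{Pr}(\sigma^2_s > D_m)=1$ whenever $\sigma^2 > D_m$ (and the trivial case $\sigma^2 \leq D_m$ yields $\text{P}_\text{Dout}=0$, which is consistent with the formula because $(\sigma^2/D_m)^{1/b}-1 \leq 0$ can be handled by the $[\cdot]^+$ convention). Moreover, with this choice of $R^*$ one has $\sigma^2 2^{-2bR^*} = D_m$, so $\text{Pr}(\sigma^2_s 2^{-2bR^*} > D_m) = 0$. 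Substituting these into the expression of Proposition \ref{P26} collapses it to
\begin{equation}
\text{P}_\text{Dout} = 1 - \exp\!\left(-\frac{2^{2R^*}-1}{\bar{P}}\right),
\end{equation}
and using $2^{2R^*}-1 = (\sigma^2/D_m)^{1/b}-1$ delivers the claimed closed form.

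For the exponent, I would reuse the high-SNR expansion already carried out in the proof of Proposition \ref{P26}. As $\bar{P}\to\infty$, the argument of the exponential tends to zero, so by \eqref{E99}
\begin{equation}
\text{P}_\text{Dout} \;\cong\; \frac{(\sigma^2/D_m)^{1/b}-1}{\bar{P}}.
\end{equation}
Plugging this into the definition \eqref{E83} yields
\begin{equation}
\Delta_{\textit{OD}} = \lim_{\bar{P}\to\infty} \frac{\ln\bar{P} - \ln\!\left((\sigma^2/D_m)^{1/b}-1\right)}{\ln\bar{P}} = 1,
\end{equation}
so $\Delta_{\textit{OD}}$ is of the order $O(1)$, as claimed.

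There is no real obstacle here; the whole argument is a specialization. The only mild point to be careful about is the degenerate case $\sigma^2 \leq D_m$, where both the probability expression and the formal rate $R^*$ vanish, and one should simply note that no outage can occur in that regime, so the stated formula remains consistent.
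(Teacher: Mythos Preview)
Your proposal is correct and follows exactly the route the paper intends: Corollary~\ref{P29} is stated in the paper as an immediate specialization of Propositions~\ref{P25} and~\ref{P26} (via Corollary~\ref{P28}) to the stationary case, and your substitution $\text{Pr}(\sigma^2_s>D_m)=1$, $\text{Pr}(\sigma^2_s2^{-2bR^*}>D_m)=0$, $2^{2R^*}-1=(\sigma^2/D_m)^{1/b}-1$ together with the high-SNR expansion via \eqref{E99} reproduces both the outage probability and the exponent computation verbatim.
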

\begin{remark}\label{rem2}
Noting Corollaries \ref{P23} and \ref{P29} for transmission of a
stationary source over a block fading channel, one sees that the
CORACP and CRCP schemes provide the same distortion outage
probability and outage distortion exponent. This implies that in
this setting with constant power, an optimized fixed rate provides
all the gain that may be obtained by rate adaptation.
\end{remark}


\subsection{Analytical Performance Comparison}\label{SVB}
In the sequel, we quantify the respective asymptotic outage
distortion gain $G_{\textit{OD}}$ of SCOPA-MDO, COPA-MDO, CORACP and
CRCP for transmission of a quasi-stationary source over a block
fading channel.
\begin{proposition}\label{P17}
In transmission of a quasi-stationary source over a Rayleigh block
fading channel, the asymptotic outage distortion gain obtained by
SCOPA-MDO with respect to COPA-MDO is given by
\begin{equation}\label{E1}
G\!_{O\!D}=10\log_{10}\left(\left(\frac{\underset{s}{\operatorname{max}}\{\sigma^2_s\}}{D_m}\right)^{\frac{1}{b}}-1\right)-10\log_{10}{\underset{s:\sigma^2_s>
D_m}{\sum}{\left(\left(\frac{\sigma^2_s}{D_m}\right)^{\frac{1}{b}}-1\right)P(s)}}.
\end{equation}

\end{proposition}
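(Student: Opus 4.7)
The plan is to unpack the definition in \eqref{E68} and combine the asymptotic expressions for the distortion outage probability of the two schemes, already established in the proofs of Propositions \ref{P13} and \ref{P15}, to solve for the ratio of average powers that yield a common (vanishing) distortion outage probability. To lighten notation, let $B:=\left(\frac{\underset{s}{\operatorname{max}}\{\sigma^2_s\}}{D_m}\right)^{1/b}-1$ and $A:=\underset{s:\sigma^2_s>D_m}{\sum}\left(\left(\frac{\sigma^2_s}{D_m}\right)^{1/b}-1\right)P(s)$, so that the claim reduces to $G_{OD}=10\log_{10}(B/A)$.

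First, I would recall from \eqref{E69} that for large $\bar P_2$, the COPA-MDO scheme satisfies
\begin{equation*}
\text{P}_\text{Dout}^{\text{COPA}}(\bar P_2)\;\cong\;\text{Pr}(\sigma^2_s>D_m)\exp\!\left(-\bar P_2/B\right),
\end{equation*}
while from \eqref{E103}--\eqref{E104}, for large $\bar P_1$, the SCOPA-MDO scheme satisfies
\begin{equation*}
\text{P}_\text{Dout}^{\text{SCOPA}}(\bar P_1)\;\cong\;A\exp\!\left(-\bar P_1/A-C\right),
\end{equation*}
where $C$ collects the bounded logarithmic constants appearing in the second exponential factor of \eqref{E103}. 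The next step is to invoke the definition \eqref{E68} by choosing $\bar P_1,\bar P_2$ so that the two asymptotic outage probabilities agree; equating the right-hand sides and taking logarithms yields
\begin{equation*}
\bar P_2/B\;=\;\bar P_1/A\;+\;\ln\!\big(\text{Pr}(\sigma^2_s>D_m)/A\big)+C.
\end{equation*}

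From here it is a one-line passage to the limit. Since the bracketed constant on the right is independent of $\bar P_1$, dividing through by $\bar P_1$ and letting $\bar P_1\to\infty$ (and hence $\bar P_2\to\infty$) gives $\bar P_2/\bar P_1\to B/A$. Substituting into \eqref{E68} produces $G_{OD}=10\log_{10}(\bar P_2/\bar P_1)\to 10\log_{10}B-10\log_{10}A$, which is exactly the claimed expression.

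The only real obstacle is bookkeeping: one must confirm that the lower-order additive constant $\ln(\text{Pr}(\sigma^2_s>D_m)/A)+C$ is genuinely $o(\bar P_1)$, i.e., that both $\text{Pr}(\sigma^2_s>D_m)$ and $A$ are strictly positive (guaranteed by the non-trivial case $\max_s\sigma^2_s>D_m$) and that the constant $C$ is finite (guaranteed because each summand in \eqref{E103} involves only finitely many states with $\sigma^2_s>D_m$). Once these mild non-degeneracy conditions are verified, the asymptotic gain follows with no further calculation.
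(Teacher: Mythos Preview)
Your proposal is correct and follows essentially the same approach as the paper: both equate the high-SNR asymptotic outage expressions \eqref{E69} and \eqref{E103}--\eqref{E104} and read off the limiting ratio $\bar P_2/\bar P_1=B/A$ before applying \eqref{E68}. Your version is in fact slightly more careful than the paper's, since you explicitly track and discard the bounded additive constants (the paper simply equates the dominant exponents in \eqref{E71} without comment).
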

\begin{proof}
The proof is provided in Appendix \ref{A2}.
\end{proof}
\begin{proposition}\label{P18}
In transmission of a quasi-stationary source over a Rayleigh block
fading channel, the asymptotic outage distortion gain obtained by
COPA-MDO with respect to CORACP is given by
\begin{equation}\label{E2}
G\!_{O\!D}=10\log_{10}{\frac{\bar{P}_2}{\left(\frac{\underset{s}{\operatorname{max}}\{\sigma^2_s\}}{D_m}\right)^{\frac{1}{b}}-1}}-10\log_{10}{\ln{\frac{\bar{P}_2}{\underset{s:\sigma^2_s>
D_m}{\sum}{\left(\left(\frac{\sigma^2_s}{D_m}\right)^{\frac{1}{b}}-1\right)P(s)}}}},
\end{equation}
where $\bar{P}_1$ and $\bar{P}_2$ are power limits in COPA-MDO and
CORACP; $\bar{P}_1(\rm{dB})=$$\bar{P}_2(\rm{dB})$-$G\!_{O\!D}$.
\end{proposition}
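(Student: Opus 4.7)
The plan is to equate the high-SNR asymptotic forms of the distortion outage probability already derived for the two schemes and then solve for the relationship between $\bar{P}_1$ (COPA-MDO) and $\bar{P}_2$ (CORACP) that yields the same asymptotic outage.

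First, I would invoke equation \eqref{E69}, established in the proof of Proposition \ref{P13}, which states that for COPA-MDO at large $\bar{P}_1$,
\begin{equation*}
\text{P}_\text{Dout}^{\text{COPA}} \cong \text{Pr}(\sigma_s^2 > D_m)\,\exp\!\left(-\frac{\bar{P}_1}{A}\right), \qquad A := \left(\frac{\underset{s}{\operatorname{max}}\{\sigma_s^2\}}{D_m}\right)^{\frac{1}{b}} - 1,
\end{equation*}
together with equation \eqref{E11}, derived in the proof of Proposition \ref{P12}, which gives for CORACP at large $\bar{P}_2$,
\begin{equation*}
\text{P}_\text{Dout}^{\text{CORACP}} \cong \frac{B}{\bar{P}_2}, \qquad B := \underset{s:\sigma_s^2 > D_m}{\sum}\left(\left(\frac{\sigma_s^2}{D_m}\right)^{\frac{1}{b}} - 1\right)P(s).
\end{equation*}

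Next, following the definition \eqref{E68} of asymptotic outage distortion gain, I would impose that the two schemes achieve the same distortion outage asymptotically, producing
\begin{equation*}
\text{Pr}(\sigma_s^2 > D_m)\,e^{-\bar{P}_1/A} \;=\; \frac{B}{\bar{P}_2}.
\end{equation*}
Taking the natural logarithm and solving for $\bar{P}_1$ gives
\begin{equation*}
\bar{P}_1 \;=\; A\bigl[\ln\bar{P}_2 - \ln B + \ln\text{Pr}(\sigma_s^2 > D_m)\bigr].
\end{equation*}
As $\bar{P}_2 \to \infty$, $\ln\bar{P}_2$ diverges while $\ln\text{Pr}(\sigma_s^2 > D_m)$ stays bounded, so to leading order $\bar{P}_1 \sim A\ln(\bar{P}_2/B)$. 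Substituting into $G_{OD} = 10\log_{10}\bar{P}_2 - 10\log_{10}\bar{P}_1$ and splitting $10\log_{10}[A\ln(\bar{P}_2/B)]$ into its two summands then reproduces \eqref{E2}.

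The main (and essentially only) subtlety is justifying that the additive term $A\ln\text{Pr}(\sigma_s^2 > D_m)$ may be dropped before forming $10\log_{10}\bar{P}_1$. Because its contribution is $O(1)$ while $A\ln(\bar{P}_2/B)$ grows unboundedly, after applying the outer $10\log_{10}$ it contributes only a $o(1)$ correction to $G_{OD}$ in the limit $\bar{P}_2 \to \infty$. This is exactly the same class of high-SNR approximation (based on \eqref{E98} and \eqref{E99}) already used in the proofs of Propositions \ref{P13} and \ref{P12}, so no additional analytic machinery is required.
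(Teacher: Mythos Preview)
Your proposal is correct and follows essentially the same route as the paper: equate the high-SNR asymptotics \eqref{E69} and \eqref{E11} to obtain $\bar{P}_1/A \sim \ln(\bar{P}_2/B)$ (the paper's equation \eqref{E45}) and then form $G_{OD}$ from \eqref{E68}. Your treatment is actually slightly more careful than the paper's, since you explicitly justify discarding the bounded $\ln\text{Pr}(\sigma_s^2>D_m)$ term, which the paper drops silently.
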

\begin{proof}
The proof is provided in Appendix \ref{A2}.
\end{proof}

\begin{proposition}\label{P20}
In transmission of a quasi-stationary source over a Rayleigh block
fading channel, the asymptotic outage distortion gain obtained by
CORACP with respect to CRCP is given by
\begin{equation}\label{E47}
G\!_{O\!D}=10\log_{10}\left(\left(\left(\frac{\underset{s}{\operatorname{max}}\{\sigma^2_s\}}{D_m}\right)^{\frac{1}{b}}-1\right)\text{Pr}\left(\sigma^2_s>D_m\right)\right)-10\log_{10}{\underset{s:\sigma^2_s>
D_m}{\sum}{\left(\left(\frac{\sigma^2_s}{D_m}\right)^{\frac{1}{b}}-1\right)P(s)}}.
\end{equation}
\end{proposition}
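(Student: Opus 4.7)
The plan is to invoke the definition \eqref{E68} of the asymptotic outage distortion gain directly, using the high-SNR approximations of the distortion outage probability that have already been established for CORACP (Proposition \ref{P12}) and CRCP (Proposition \ref{P26}). Let $\bar{P}_1$ denote the average power limit in CORACP and $\bar{P}_2$ the average power limit in CRCP. Since we want both schemes to asymptotically achieve the same distortion outage probability, I would equate their high-SNR expressions and solve for the power ratio.

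First, I would recall \eqref{E11}, which gives
\begin{equation*}
\text{P}_\text{Dout}^{\text{CORACP}}\cong\frac{1}{\bar{P}_1}\underset{s:\sigma^2_s>D_m}{\sum}\left(\left(\frac{\sigma^2_s}{D_m}\right)^{1/b}-1\right)P(s).
\end{equation*}
Next, for CRCP I would note that at large $\bar{P}_2$ the optimum $R^*$ from Proposition \ref{P25} reduces to $R^*=\frac{1}{2b}\log_2\frac{\max_s\{\sigma^2_s\}}{D_m}$, which forces $\text{Pr}(\sigma^2_s2^{-2bR^*}>D_m)=0$. Applying \eqref{E99} to the surviving term in the CRCP outage expression (equation preceding \eqref{E44}) yields
\begin{equation*}
\text{P}_\text{Dout}^{\text{CRCP}}\cong\frac{\left(\frac{\max_s\{\sigma^2_s\}}{D_m}\right)^{1/b}-1}{\bar{P}_2}\,\text{Pr}(\sigma^2_s>D_m).
\end{equation*}

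Setting these two leading-order asymptotic forms equal and solving for $\bar{P}_2/\bar{P}_1$ gives
\begin{equation*}
\frac{\bar{P}_2}{\bar{P}_1}=\frac{\left(\left(\frac{\max_s\{\sigma^2_s\}}{D_m}\right)^{1/b}-1\right)\text{Pr}(\sigma^2_s>D_m)}{\underset{s:\sigma^2_s>D_m}{\sum}\left(\left(\frac{\sigma^2_s}{D_m}\right)^{1/b}-1\right)P(s)}.
\end{equation*}
Taking $10\log_{10}$ of both sides and invoking \eqref{E68} yields \eqref{E47}, completing the argument. This mirrors the pattern already used for Propositions \ref{P17} and \ref{P18} in Appendix \ref{A2}.

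The hard part has already been dispatched by Propositions \ref{P12} and \ref{P26}, so the only delicate point is confirming that $\text{Pr}(\sigma^2_s 2^{-2bR^*}>D_m)$ truly vanishes at the asymptotic rate $R^*$, so that the CRCP outage is dominated by the $1/\bar{P}_2$ channel-outage term rather than by source-side distortion. Once this is in hand the result is essentially algebraic, and sign-consistency with $G_{OD}\geq 0$ follows from $\max_s\sigma^2_s\geq\sigma^2_s$, which ensures that the numerator on the right-hand side above dominates the denominator, confirming that CORACP is indeed superior to CRCP in the high-SNR regime.
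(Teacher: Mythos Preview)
Your argument is correct and follows essentially the same route as the paper: the paper's proof (Appendix~\ref{A2}) likewise equates the high-SNR distortion outage approximations \eqref{E11} for CORACP and \eqref{E44} for CRCP, arriving at \eqref{E77}, which is exactly your power-ratio identity before taking $10\log_{10}$. Your closing sign-consistency remark is a valid extra observation that the paper does not include.
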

\begin{proof}
The proof is provided in Appendix \ref{A2}.
\end{proof}

As evident, $G_{\textit{OD}}$ of SCOPA-MDO with respect to
COPA-MDO and CORACP with respect to CRCP are
independent of the power. This is while the gain of COPA-MDO with respect to
CORACP depends nonlinearly on the average transmission power limit
and it improves when the power limit increases. The dependency of
$G_{\textit{OD}}$ on the variance of the source in different states
and the maximum allowable distortion is also seen in \eqref{E1} to
\eqref{E47}. Table \ref{T4} presents the value of $G\!_{O\!D}$ at $D_m=8 \rm dB$.

The distortion outage exponent $\Delta_{\textit{OD}}$ of the
COPA-MDO and SCOPA-MDO schemes which are derived in line with the
proofs of the Propositions \ref{P13} and \ref{P15}, are quantified
in Table \ref{T3}. As denoted in Propositions \ref{P12} and
\ref{P26}, CORACP and CRCP schemes give $\Delta_{\textit{OD}}=1$.
The distortion outage exponent indicates the speed at which the
distortion outage ($\rm dB$) reduces as the average power (limit)
($\rm dB$) increases. Therefore, as evident, this speed is
noticeably high with SCOPA-MDO and very low with CORACP and CRCP.
Furthermore, the $\Delta_{\textit{OD}}$ obtained by SCOPA-MDO and
COPA-MDO depends on the average power limit $\bar{P}$, maximum
allowable distortion $D_m$ and bandwidth expansion ratio $b$. It is
observed that with SCOPA-MDO and COPA-MDO, $\Delta_{\textit{OD}}$
improves as $b$, $\bar{P}$ or $D_m$ increase. In fact, for a given
value of $N$, a larger $b$ implies a more finely encoded source that
is more sensitive to channel errors and hence can more greatly
benefit from increased power. The results in Tables I and II
indicate that from the perspective of probability of distortion
outage, for delay-limited communication of quasi-stationary sources,
CORACP and CRCP schemes may not be appropriate designs.

\begin{table}[!t]
\caption{Asymptotic outage distortion gain $G\!_{O\!D}$ of scheme 1
with respect to scheme 2 for source G2 evaluated at $b=1$ and $D_m=8
\rm dB$.} \label{T4} \centering
\begin{tabular}{|c|c|c|c|}
\hline Scheme 1&Scheme 2&$G\!_{O\!D}$ at $\bar{P}_2=25 \rm dB$&$G\!_{O\!D}$ at $\bar{P}_2=20 \rm dB$\\
\hline SCOPA-MDO&COPA-MDO&7.14&7.14\\
\hline
COPA-MDO&CORACP&12.28&8.16\\
\hline
CORACP&CRCP&5.74&5.74\\
\hline
\end{tabular}
\end{table}
\begin{table*}[t]
\caption{Distortion outage exponent $\Delta_{\textit{OD}}$ of the
proposed schemes for source G2}\label{T3} \centering
\begin{tabular}{|c|c|c||c|c|}
\hline
Bandwidth Expansion Ratio&\multicolumn{2}{c||}{$b$=1} &\multicolumn{2}{c|}{$b$=5}\\
\hline
Schemes and Settings&\tiny{SCOPA-MDO} & \tiny{COPA-MDO}& \tiny{SCOPA-MDO} & \tiny{COPA-MDO}  \\
\hline $\tiny{D_m}=8\rm dB$,  $\tiny{\bar{P}=16\rm dB}$  &18.90& 3.89&129.08&34.33\\
\hline $\tiny{D_m=8 \rm dB}$, $\tiny{\bar{P}=20\rm dB}$  &37.97&7.53&259.39&68.69\\
\hline $\tiny{D_m=5\rm dB}$,  $\tiny{\bar{P}=20\rm dB}$  &10.80&3.27& 95.75&42.53\\
\hline
\end{tabular}
\end{table*}
The following three Corollaries quantify the asymptotic outage
distortion gain in transmission of a stationary source over block
fading channels. These are directly obtained from Propositions
\ref{P17} and \ref{P20} when a stationary source is considered.

\begin{corollary}\label{P24}
In transmission of a stationary source over a Rayleigh block fading
channel, the asymptotic outage distortion gain obtained by
SCOPA-MDO with respect to COPA-MDO is equal to zero.
\end{corollary}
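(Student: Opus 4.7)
The plan is to specialize the general formula for $G_{OD}$ derived in Proposition \ref{P17} to the degenerate quasi-stationary model that describes a stationary source. First, I would observe that a stationary Gaussian source with variance $\sigma^2 \geq D_m$ can be viewed as a quasi-stationary source in which the state variable takes a single value with probability one, so $P(s)=1$ on that unique state and $\sigma^2_s \equiv \sigma^2$. (The case $\sigma^2 < D_m$ is excluded since the distortion outage probability is already zero and no meaningful gain comparison exists, as noted before Corollary \ref{P10}.)

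Next, I would substitute this specialization into \eqref{E1}. The first logarithm in \eqref{E1} involves $\max_s \{\sigma^2_s\}$, which collapses to $\sigma^2$, giving $10\log_{10}\!\left(\left(\sigma^2/D_m\right)^{1/b}-1\right)$. The second logarithm involves $\sum_{s:\sigma^2_s>D_m}\left(\left(\sigma^2_s/D_m\right)^{1/b}-1\right)P(s)$, which in the stationary case reduces to the single term $\left(\sigma^2/D_m\right)^{1/b}-1$ with weight one. The two logarithms are then identical and their difference is zero, establishing the claim.

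As a consistency check, I would point out that the result aligns with Remark \ref{rem1}: in the stationary case, SCOPA-MDO and COPA-MDO produce identical distortion outage probabilities (and the same outage distortion exponent) because rate adaptation offers nothing beyond the optimized fixed rate $R^* = \tfrac{1}{2b}\log_2(\sigma^2/D_m)$ already prescribed by COPA-MDO in Corollary \ref{P10}. If the two schemes yield the same $\mathrm{P}_{\mathrm{Dout}}$ at every $\bar{P}$, then the average powers required to achieve any target outage must coincide asymptotically, forcing $G_{OD} = 0$ by definition \eqref{E68}. No serious obstacle is anticipated since the argument is a direct specialization of an already-established identity.
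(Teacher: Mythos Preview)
Your proposal is correct and matches the paper's approach exactly: the paper states that this corollary is ``directly obtained from Proposition~\ref{P17}\ldots\ when a stationary source is considered,'' which is precisely the specialization of \eqref{E1} you carry out. Your additional consistency check via Remark~\ref{rem1} is sound but not needed for the argument.
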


\begin{corollary}\label{P22}
In transmission of a stationary source over a Rayleigh block fading
channel, the asymptotic outage distortion gain of COPA-MDO with
respect to CORACP is equal to
\begin{equation}\label{E38}
G\!_{O\!D}=10\log_{10}{\frac{X}{\ln{X}}},
\end{equation}
where
$X:=\frac{\bar{P}_2}{\left(\frac{\sigma^2}{D_m}\right)^{\frac{1}{b}}-1}$
and $\bar{P}_2$ is the power limit of CORACP.
\end{corollary}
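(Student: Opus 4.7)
The plan is to obtain this as a direct specialization of the quasi-stationary formula in Proposition \ref{P18}. That proposition asserts that the asymptotic outage distortion gain of COPA-MDO with respect to CORACP is
\begin{equation*}
G_{OD} = 10\log_{10}\frac{\bar{P}_2}{\left(\frac{\max_s\{\sigma^2_s\}}{D_m}\right)^{1/b}-1} - 10\log_{10}\ln\frac{\bar{P}_2}{\sum_{s:\sigma^2_s > D_m}\left(\left(\frac{\sigma^2_s}{D_m}\right)^{1/b}-1\right)P(s)},
\end{equation*}
so all I need to do is evaluate the two denominators under the stationary source assumption.

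First I would observe that a stationary Gaussian source corresponds to a single effective state with variance $\sigma^2$ occurring with probability one. Under the non-trivial assumption $\sigma^2 > D_m$ (otherwise $\text{P}_\text{Dout} = 0$ and a gain comparison is vacuous), the maximum $\max_s\{\sigma^2_s\}$ simply equals $\sigma^2$, and the sum $\sum_{s:\sigma^2_s > D_m}\bigl((\sigma^2_s/D_m)^{1/b}-1\bigr)P(s)$ collapses to the single term $(\sigma^2/D_m)^{1/b}-1$. Substituting these into the two denominators makes both equal to $(\sigma^2/D_m)^{1/b}-1$, so the formula becomes
\begin{equation*}
G_{OD} = 10\log_{10}\frac{\bar{P}_2}{(\sigma^2/D_m)^{1/b}-1} - 10\log_{10}\ln\frac{\bar{P}_2}{(\sigma^2/D_m)^{1/b}-1}.
\end{equation*}

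Next I would introduce the abbreviation $X := \bar{P}_2/\bigl((\sigma^2/D_m)^{1/b}-1\bigr)$ so that the two terms read $10\log_{10} X$ and $10\log_{10}\ln X$, and then combine via $\log_{10} A - \log_{10} B = \log_{10}(A/B)$ to conclude $G_{OD} = 10\log_{10}(X/\ln X)$, matching the statement. There is no meaningful obstacle here; the derivation is just substitution followed by a logarithm identity. The only subtlety worth checking is that the single-state reduction is legitimate, which it is because the sum and max in Proposition \ref{P18} both degenerate correctly when $P(s)$ is a point mass, and $X > 1$ for large $\bar{P}_2$ so that $\ln X > 0$ and the outer $\log_{10}$ is well defined.
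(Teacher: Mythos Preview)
Your proposal is correct and follows exactly the paper's approach: the paper states that these corollaries are ``directly obtained'' from the corresponding quasi-stationary propositions when a stationary source is considered, and you carry out precisely that specialization of Proposition~\ref{P18} (collapsing both the $\max$ and the sum to the single term $(\sigma^2/D_m)^{1/b}-1$) followed by the log identity.
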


\begin{corollary}\label{P24}
In transmission of a stationary source over a Rayleigh block fading
channel, the asymptotic outage distortion gain of CORACP with respect to CRCP is equal to zero.
\end{corollary}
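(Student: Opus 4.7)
The plan is to read off the result directly from the closed-form distortion outage expressions already established for a stationary source, rather than redoing any asymptotic analysis. Specifically, Corollary \ref{P23} gives the distortion outage probability of CORACP for a stationary Gaussian source over a Rayleigh block fading channel as
\begin{equation*}
\text{P}_\text{Dout}^{\text{CORACP}}(\bar{P}) = 1 - \exp\!\left(-\frac{(\sigma^2/D_m)^{1/b}-1}{\bar{P}}\right),
\end{equation*}
and Corollary \ref{P29} gives the identical expression for CRCP, because in the stationary case Corollary \ref{P28} forces $R^\ast = \tfrac{1}{2b}\log_2(\sigma^2/D_m)$, which makes $\text{Pr}(\sigma^2_s 2^{-2bR^\ast} > D_m) = 0$ and reduces the CRCP formula in Proposition \ref{P26} to exactly the CORACP formula.

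Given this, my first step is to observe that $\text{P}_\text{Dout}^{\text{CORACP}}(\bar{P}) = \text{P}_\text{Dout}^{\text{CRCP}}(\bar{P})$ as functions of $\bar{P}$, i.e., the two schemes induce the same distortion-outage-versus-power curve for a stationary source. Next, I apply the definition of the asymptotic outage distortion gain in \eqref{E68}: $G_{OD} = 10\log_{10}\bar{P}_2 - 10\log_{10}\bar{P}_1$, where $\bar{P}_1$ and $\bar{P}_2$ are the average powers needed by CORACP and CRCP, respectively, to asymptotically achieve a specified target $\text{P}_\text{Dout}$. Since the two $\text{P}_\text{Dout}(\bar{P})$ curves coincide, for any fixed target value the implicit equation $\text{P}_\text{Dout}(\bar{P}) = \text{target}$ has the same solution for both schemes, so $\bar{P}_1 = \bar{P}_2$ and hence $G_{OD} = 0$.

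There is essentially no obstacle here: the corollary is a bookkeeping consequence of Corollaries \ref{P23} and \ref{P29} (and is already foreshadowed by Remark \ref{rem2}, which states the same conclusion at the level of probabilities and exponents). The only thing to be careful about is to justify the step ``same outage curve $\Rightarrow$ same asymptotic power'' at the level of the limit in \eqref{E68}; this is immediate because the common curve is strictly monotone decreasing in $\bar{P}$, so its inverse is well defined for all sufficiently small target outages and yields the same $\bar{P}$ in either scheme.
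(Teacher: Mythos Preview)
Your argument is correct but takes a different route from the paper. The paper obtains this corollary by specializing Proposition \ref{P20}: in the stationary case $\underset{s}{\operatorname{max}}\{\sigma^2_s\}=\sigma^2$, $\text{Pr}(\sigma^2_s>D_m)=1$, and the sum in \eqref{E47} collapses to the single term $\bigl((\sigma^2/D_m)^{1/b}-1\bigr)$, so the two logarithms in \eqref{E47} cancel and $G_{OD}=0$. You instead bypass the asymptotic formula entirely and argue from the exact coincidence of the outage curves in Corollaries \ref{P23} and \ref{P29}, which forces $\bar{P}_1=\bar{P}_2$ for every target outage and hence $G_{OD}=0$. Your approach is more elementary and arguably cleaner, since it avoids the high-SNR approximations underlying \eqref{E47} and works at every power level, not just asymptotically; the paper's approach has the virtue of being a mechanical specialization of the general quasi-stationary result, keeping the presentation uniform across the three corollaries.
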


The values of $G\!_{O\!D}$ in \eqref{E38} and $\Delta_{\textit{OD}}$
in Corollary \ref{P21} for the stationary source S, $\bar{P}_2=20\rm
dB$ and $D_m=8\rm dB$ respectively amount to $16.33 \rm dB$ and
$51$. Noting Remarks \ref{rem1} and \ref{rem2}, it is observed that
with stationary sources and common settings, an optimized fixed rate
scheme and an adaptive rate scheme provide the same distortion
outage probability. Therefore, in this setting, with optimum power
adaptation, COPA-MDO and SCOPA-MDO schemes; and with constant power,
CORACP and CRCP schemes provide the same distortion outage
probability.



\subsection{Numerical Results}\label{SVC}
Figs. 2a and 2b depict the distortion outage probability performance
of the presented schemes as a function of the power constraint
$\bar{P}$ for $D_m=\rm 8\, dB$ and $D_m=\rm 5\, dB$, respectively.
As expected, for a given $\bar{P}$, $\text{P}_\text{Dout}$ decreases
as $D_m$ increases. It is evident that the proposed SCOPA-MDO scheme
achieves an asymptotic outage distortion gain of about $\rm7.1\, dB$
and $\rm5.6\,  dB$ with respect to COPA-MDO, for $\bar{P}=\rm 20\,
dB$ and $D_m=\rm 8\, dB$ and $D_m=\rm 5\,dB$, respectively. In the
same settings, the COPA-MDO scheme achieves asymptotic outage
distortion gains of about $\rm9.1\, dB$ and $\rm6.3\, dB$ with
respect to CORACP; and CORACP achieves gains of $\rm6\, dB$ and
$\rm4.9 \, dB$ with respect to CRCP. The results obtained from
simulations and what is reported in Table \ref{T4} from analyses
match reasonably well given the assumption of very high average SNR
considered in the analytical performance evaluations.

The analytical results in Table \ref{T3} for $\Delta_{\textit{OD}}$
performance, may also be observed in numerical results of Figs. 2a
and 2b. Specifically, at each point on the curves, the corresponding
value in the vertical coordinate  in $\rm dB$, i.e.,
$10\log_{10}{\text{P}_\text{Dout}}$ divided by the value in the
horizontal coordinate, i.e., $\bar{P}(\rm dB)$, indicates
$-\Delta_{\textit{OD}}$. For example, as seen in Fig. 2b,
$\Delta_{\textit{OD}}$ for SCOPA-MDO is almost equal to $38$ at
$\bar{P}=20\rm \,dB$ with $D_m=8\rm \,dB$.

Figs. 3a and 3b depict the $\text{P}_\text{Dout}$ performance of the
presented schemes for different sources. As observed, as the
non-stationary characteristics of the source intensifies  (from
source S to U), the distortion outage probability increases in
general. However, the sensitivity of the performance of different
schemes to the level of non-stationarity varies. Once again, the
advantage of power adaptation is clear. For the stationary source S,
as noted in Remarks \ref{rem1} and \ref{rem2} and also seen in these
Figures, SCOPA-MDO and COPA-MDO schemes provide the same distortion
outage probability in this setting. This also holds true for CORACP
and CRCP. It is observed that SCOPA-MDO (or equivalently COPA-MDO)
scheme achieves an asymptotic outage distortion gain of about
$\rm16.8\, dB$ with respect to CORACP (or equivalently CRCP) for
$\bar{P}=\rm 20\, dB$ and $D_m=\rm 8\, dB$.

It is noteworthy that the four methods discussed, rely on different
levels of source and channel state information (SSI and CSI).
Specifically, it can be verified that three schemes of SCOPA-MDO,
COPA-MDO and CORACP require instantaneous CSI for rate and/or power
adaptation, while CRCP needs CSI statistics. The SCOPA-MDO scheme also needs instantaneous SSI, while
COPA-MDO and CRCP simply need SSI statistics.

\section{Conclusions}\label{SVI}
In this paper, delay-limited transmission of a quasi-stationary
source over a block fading channel was considered. Aiming at
minimizing the distortion outage probability, two transmission
strategies namely channel-optimized power adaptation with fixed rate
(COPA-MDO) and source and channel optimized power (and rate)
adaptation (SCOPA-MDO) were introduced. The SCOPA-MDO scheme
provides a superior performance, while the COPA-MDO scheme enjoys
the simplicity of single rate transmission. In high SNR regime,
different scaling laws involving outage distortion exponent and
asymptotic outage distortion were derived. Our studies confirm the
benefit of power adaption from a distortion outage perspective and
for delay-limited transmission of quasi-stationary sources over
wireless block fading channels. The analyses of the presented
schemes in the case of stationary sources indicate the same outage
distortion performance with or without rate adaptation.




\appendices

\section{Proof of Proposition \ref{P8}}\label{A1}
Noting \eqref{E19} and \eqref{E95} for exponentially distributed
channel gain, we can obtain \eqref{E102} and subsequently $q^*_2$ by
numerical methods. Using \eqref{E14} and \eqref{E24}, the distortion
for each state of the source and the channel is given by
\begin{equation}\label{E36}
D(\sigma_s,\alpha,\gamma)=\begin{cases} D_m  &\text{if}
\;\frac{\sigma^2_s}{D_m}> 1
\;\text{and}\;\frac{\left[\left(\frac{\sigma^2_s}{D_m}\right)^{\frac{1}{b}}-1\right]}{\alpha}<
q^*_2\\
\sigma^2_s &\text{otherwise}{.}
\end{cases}
\end{equation}
Therefore, we have
\begin{equation}
\text{P}_\text{Dout}=\text{Pr}\left(\frac{1}{\alpha}\left[\left(\frac{\sigma^2_s}{D_m}\right)^{\frac{1}{b}}-1\right]\geq
q^*_2,\,\sigma^2_s>D_m\right).
\end{equation}
Considering exponentially distributed channel gain $\alpha$,
deriving \eqref{E46} is straightforward.

\section{Proof of Proposition \ref{P1}}\label{A4}
Noting \eqref{E24} for $\gamma=\bar{P}$, distortion can be written
as follows
\begin{equation}\label{E4}
D(\sigma_s,\alpha,\gamma=\bar{P})=\sigma^2_s
2^{-2bR}=\frac{\sigma^2_s}{\left(1+\alpha \bar{P}\right)^b}
\end{equation}
and then we can derive
\begin{equation}\label{E6}
\begin{split}
\text{P}_\text{Dout}&=\text{Pr}\left(\frac{\sigma^2_s}{\left(1+\alpha
\bar{P}\right)^b}>D_m,\sigma^2_s>D_m\right)=\text{Pr}\left(\alpha<\frac{1}{\bar{P}}\left(\left(\frac{\sigma^2_s}{D_m}\right)^{\frac{1}{b}}-1\right),\sigma^2_s>D_m\right).
\end{split}
\end{equation}
Considering exponentially distributed channel gain $\alpha$,
obtaining \eqref{E5} is straightforward.

\section{Proof of Propositions \ref{P17} and \ref{P18}}\label{A2}
The average power to asymptotically achieve a certain distortion
outage probability using SCOPA-MDO and COPA-MDO schemes are denoted
by $\bar{P}_1$ and $\bar{P}_2$, respectively. Thus, we can use
\eqref{E68} to derive  $G\!_{O\!D}$. Noting \eqref{E103},
\eqref{E104} and \eqref{E69} we set
\begin{equation}\label{E71}
\begin{split}
\underset{\bar{P_1}\rightarrow\infty}{\lim}\frac{\bar{P}_1}{\underset{s:\sigma^2_s>
D_m}{\sum}{\left(\left(\frac{\sigma^2_s}{D_m}\right)^{\frac{1}{b}}-1\right)P(s)}}=\underset{\bar{P_2}\rightarrow\infty}{\lim}\frac{\bar{P}_2}{\left(\frac{\underset{s}{\operatorname{max}}\{\sigma^2_s\}}{D_m}\right)^{\frac{1}{b}}-1}.
\end{split}
\end{equation}
Therefore, we can derive \eqref{E1} and complete the proof of
Proposition \ref{P17}.

The proof of Propositions \ref{P18} or \ref{P20} is straightforward,
when we use \eqref{E69} and \eqref{E11} or \eqref{E11} and
\eqref{E44}; and obtain the following
\begin{equation}\label{E45}
\begin{split}
\underset{\bar{P_2}\rightarrow\infty}{\lim}\frac{\bar{P}_1}{\left(\frac{\underset{s}{\operatorname{max}}\{\sigma^2_s\}}{D_m}\right)^{\frac{1}{b}}-1}=\underset{\bar{P}_2\rightarrow\infty}{\lim}\ln{\frac{\bar{P}_2}{\underset{s:\sigma^2_s>
D_m}{\sum}{\left(\left(\frac{\sigma^2_s}{D_m}\right)^{\frac{1}{b}}-1\right)P(s)}}}
\end{split}
\end{equation}

\begin{align}\label{E77}
\underset{\bar{P}_2\rightarrow\infty}{\lim}{\frac{\bar{P}_1}{\underset{s:\sigma^2_s>
D_m}{\sum}{\left(\left(\frac{\sigma^2_s}{D_m}\right)^{\frac{1}{b}}-1\right)P(s)}}}=\underset{\bar{P_2}\rightarrow\infty}{\lim}{\frac{\bar{P}_2}{\left(\left(\frac{\underset{s}{\operatorname{max}}\{\sigma^2_s\}}{D_m}\right)^{\frac{1}{b}}-1\right)\text{Pr}\left(\sigma^2_s>D_m\right)}}.
\end{align}

\ifCLASSOPTIONcaptionsoff
  \newpage
\fi



\bibliographystyle{Ieeetr}
\bibliography{IEEEabrv,Ref}

%
%
%


\begin{figure}[h!]
\centering
\includegraphics[width=3.6in]{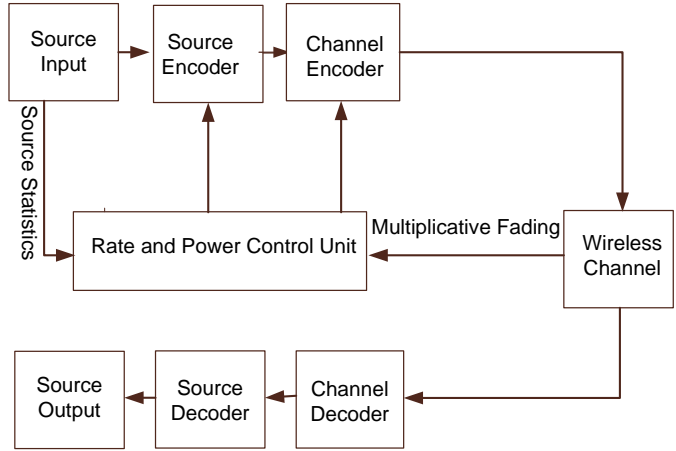}
\caption{Block diagram of the system.} \label{F20}
\end{figure}


\begin{figure}[h!]
\centering
\begin{tabular}{cc}
\includegraphics[width=3.1in]{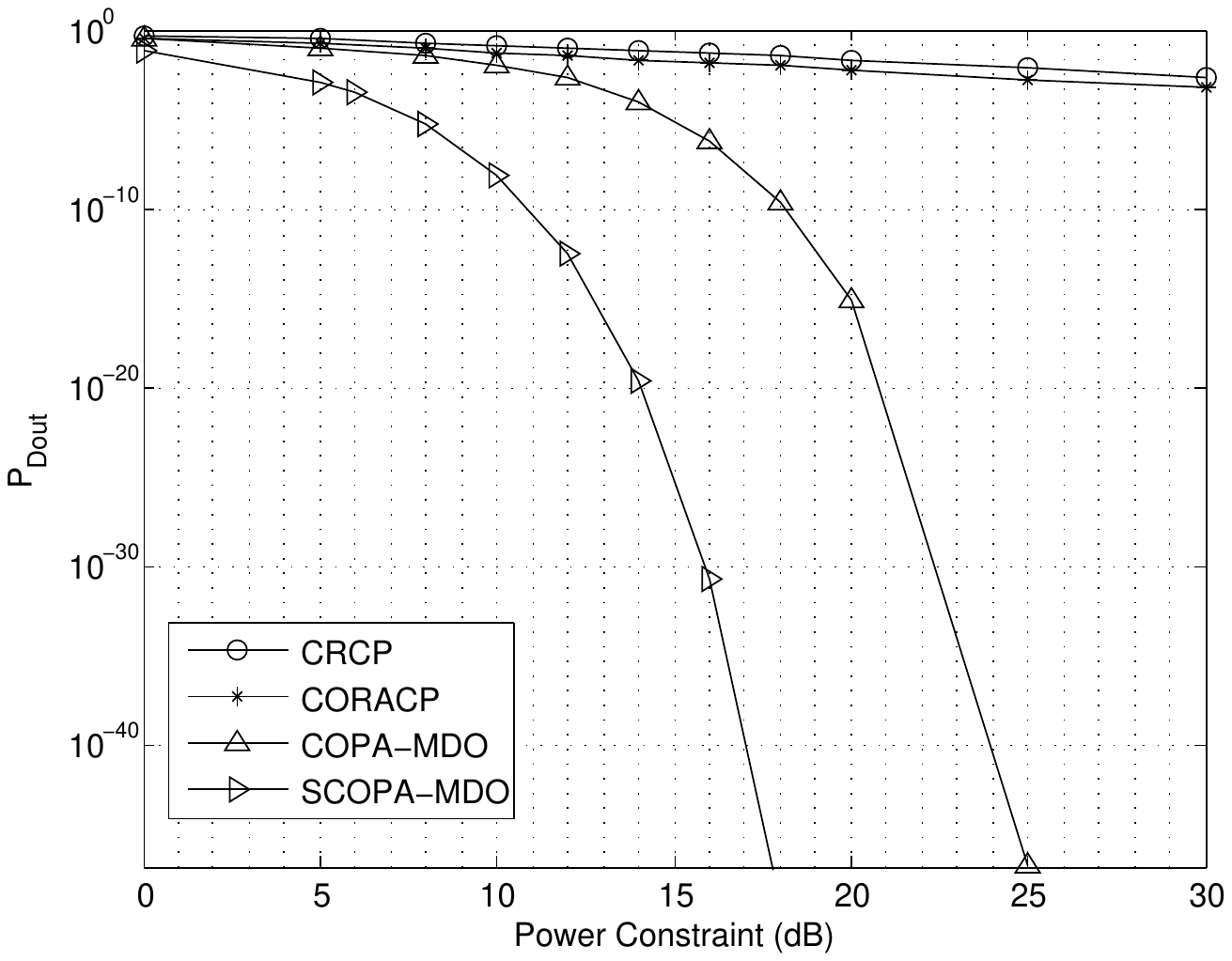}&
\includegraphics[width=3.1in]{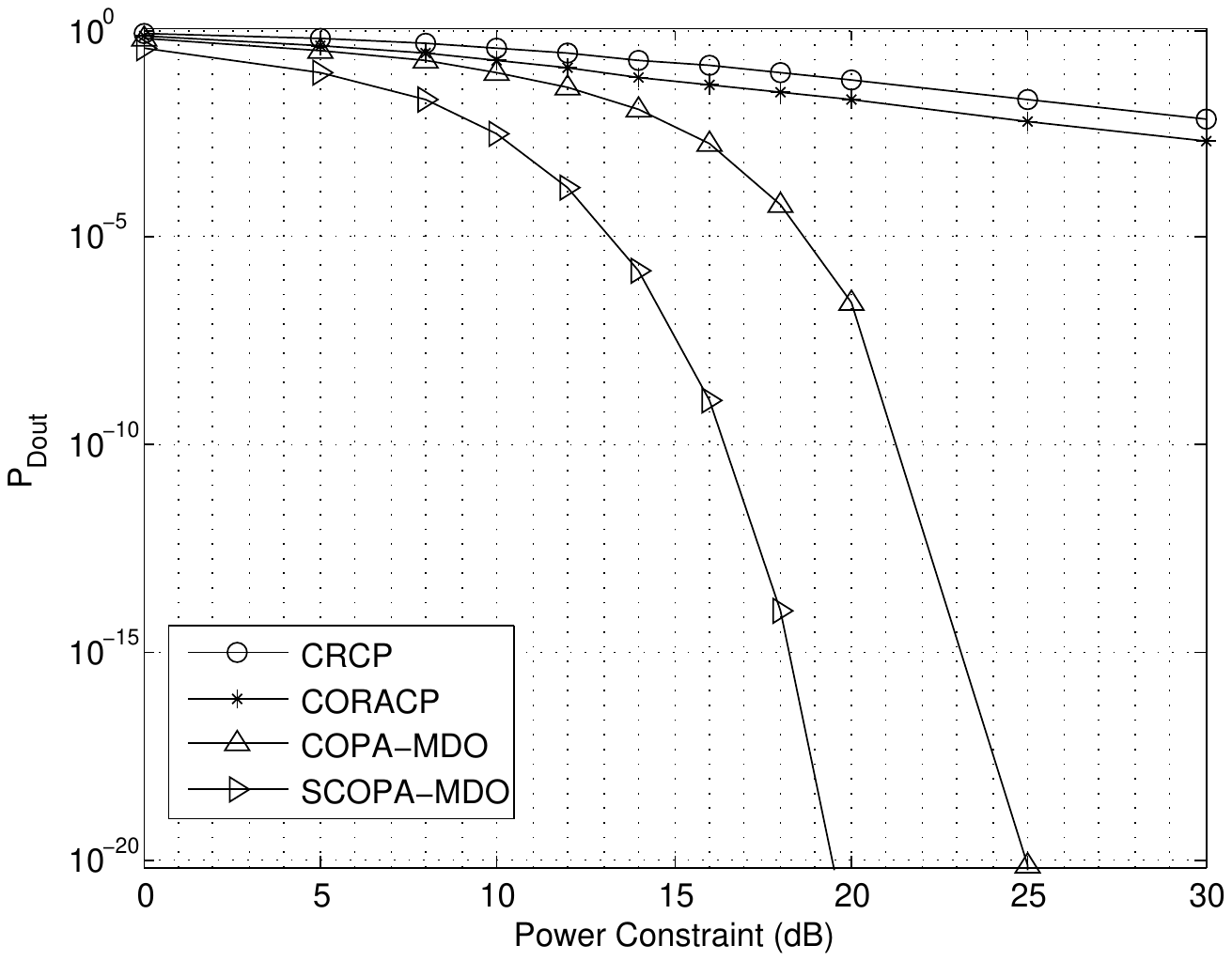}\\
\scriptsize(a)&\scriptsize(b)\\
\end{tabular}
\caption{Distortion outage probability versus $\bar{P}$; $b=1$
and (a) $D_m=\rm8\,dB$ (b)$D_m=\rm5\,dB$.}
\end{figure}

\begin{figure}[h!]
\centering
\begin{tabular}{cc}
\includegraphics[width=3.1in]{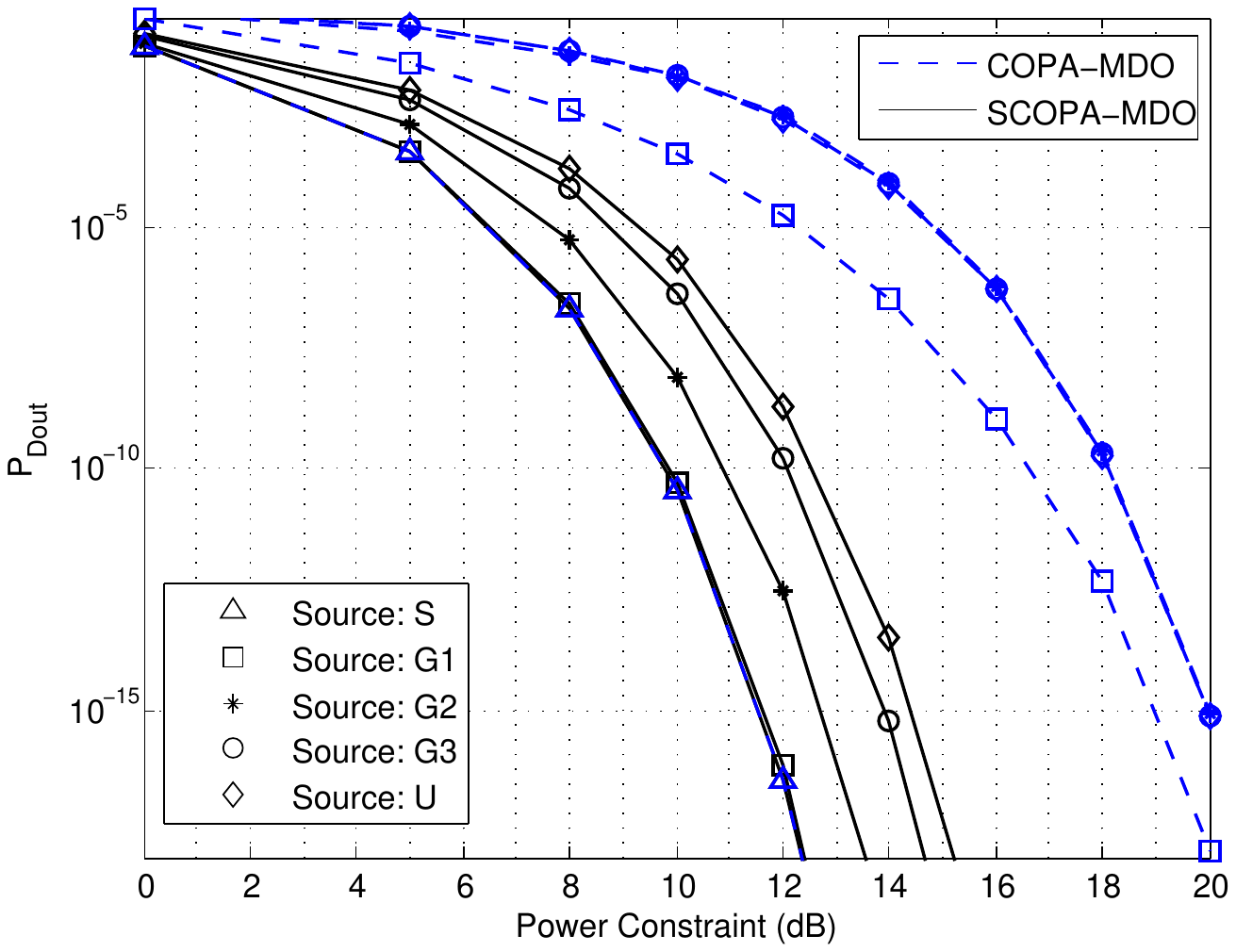}&
\includegraphics[width=3.1in]{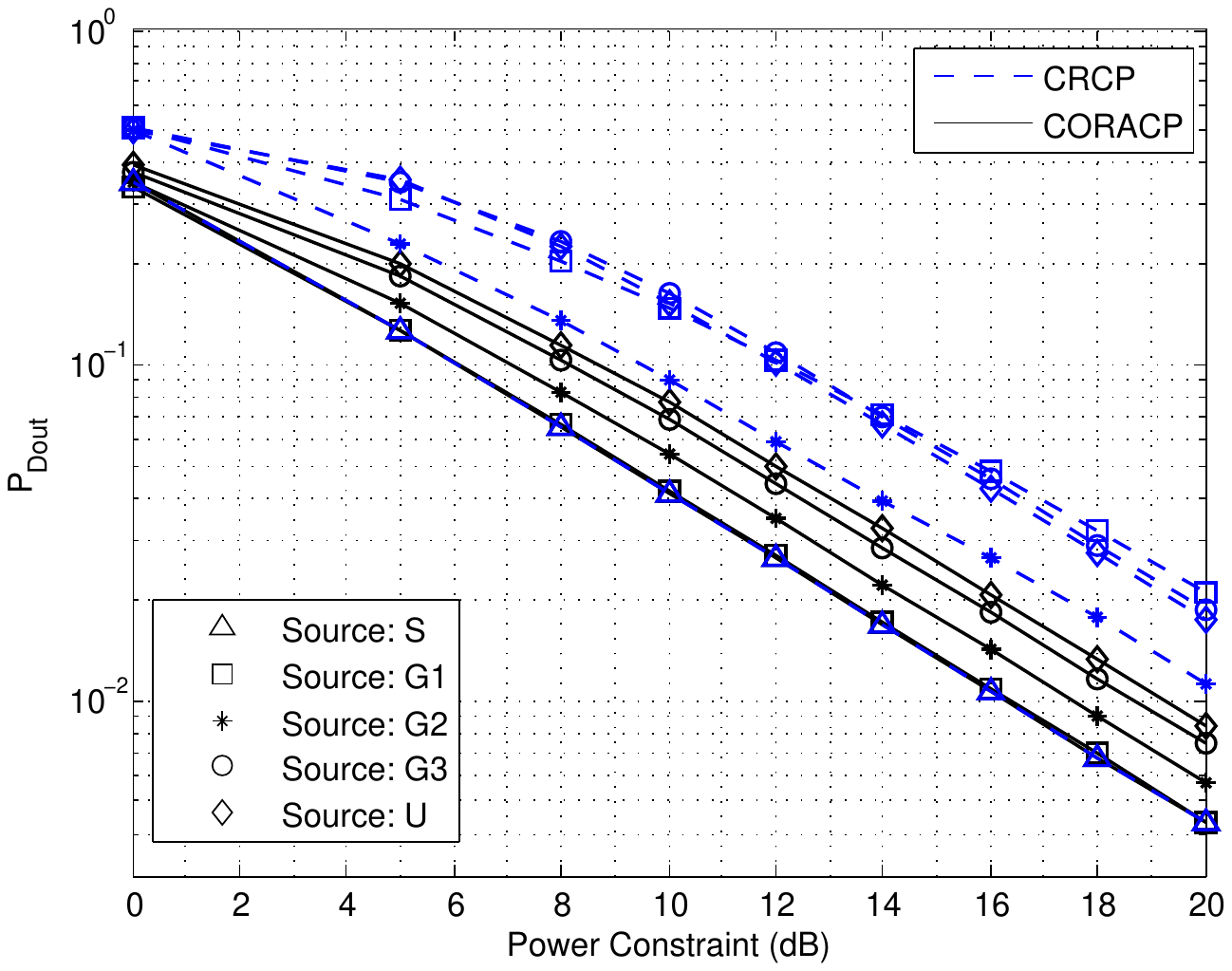}\\
\scriptsize(a)&\scriptsize(b)\\
\end{tabular}
\caption{Distortion outage probability versus $\bar{P}$ for five
different sources; $b=1$ and $D_m=\rm8\,dB$.} \label{F6}
\end{figure}
\!

\end{document}